\documentclass[lettersize,journal,10pt,twocolumn]{IEEEtran}

\usepackage{booktabs}
\usepackage{wrapfig}
\usepackage{lipsum}
\usepackage{verbatim}
\usepackage{algorithm}
\usepackage{algorithmicx}
\usepackage{algpseudocode}
\usepackage{graphicx}
\usepackage{cite}
\usepackage{amsmath}
\usepackage{amsthm}
\usepackage{epstopdf}
\usepackage[font=small,labelfont=bf]{caption}
\usepackage{makecell}
\usepackage{arydshln}
\usepackage{multirow}
\DeclareUnicodeCharacter{2113}{\ensuremath{\ell}}
\usepackage{multicol}

\usepackage{nicefrac}
\usepackage{bm}
\usepackage{amsfonts}
\usepackage{float}

\usepackage{color}

\usepackage{mdwmath}
\usepackage{url}
\usepackage{amssymb}

\usepackage{tikz}
\usetikzlibrary{
    positioning,
    arrows.meta,
    shapes.geometric,
    fit,
    calc
}
\usepackage{adjustbox} % For scaling the diagram to fit within the page

\newtheorem{lemma}{\textbf{Lemma}}
\newtheorem{theorem}{\textbf{Theorem}}

\newtheorem{corollary}{\textbf{Corollary}}

\usepackage{hyperref}

\def\0{{\mathbf 0}}
\def\1{{\mathbf 1}}

\def\a{{\mathbf a}}

\def\f{{\mathbf f}}
\def\g{{\mathbf g}}
\def\h{{\mathbf h}}

\def\m{{\mathbf m}}

\def\r{{\mathbf r}}
\def\s{{\mathbf s}}

\def\u{{\mathbf u}}
\def\v{{\mathbf v}}

\def\x{{\mathbf x}}
\def\y{{\mathbf y}}

\def\D{{\mathbf D}}

\def\H{{\mathbf H}}
\def\I{{\mathbf I}}

\def\L{{\mathbf L}}
\def\M{{\mathbf M}}

\def\P{{\mathbf P}}
\def\Q{{\mathbf Q}}
\def\R{{\mathbf R}}

\def\V{{\mathbf V}}
\def\W{{\mathbf W}}
\def\X{{\mathbf X}}

\def\Z{{\mathbf Z}}

\def\ie{{\textit{i.e.}}}
\def\eg{{\textit{e.g.}}}

\def\cC{{\mathcal C}}
\def\cD{{\mathcal D}}
\def\cE{{\mathcal E}}

\def\cG{{\mathcal G}}

\def\cI{{\mathcal I}}

\def\cN{{\mathcal N}}
\def\cO{{\mathcal O}}

\def\cR{{\mathcal R}}
\def\cS{{\mathcal S}}

\def\bLambda{{\boldsymbol \Lambda}}

\ifCLASSOPTIONcompsoc 
 \usepackage[caption=false,font=normalsize,labelfont=sf,textfont=sf]{subfig}
 \else
 \usepackage[caption=false,font=footnotesize]{subfig}
 \fi
\begin{document}

%
% The "title" command has an optional parameter, allowing the author to define a "short title" to be used in page headers.
%\title{\huge Joint Time-varying Graph Estimation / Signal Interpolation via Smoothness and Low-Rank Priors}
\title{Low-rank Updates in Slowly Time-varying Graphs for Spatial-Temporal
 Signal Interpolation}
\author{%\IEEEauthorblockN{abc}
\IEEEauthorblockN{Saghar Bagheri, Gene Cheung, \emph{Fellow, IEEE}, Tim Eadie, Antonio Ortega, \emph{Fellow, IEEE}}
\renewcommand{\baselinestretch}{1.0}
\thanks{The work of G. Cheung was supported in part by the Natural Sciences and Engineering Research Council of Canada (NSERC) RGPIN-2025-06252. \emph{(Corresponding author: Gene Cheung.)}}
\thanks{{S. Bagheri, G. Cheung, and T. Eadie are with the Department of EECS, York University, Toronto, Canada (e-mail: baqeri.saqar@gmail.com, \{genec, vorin\}@yorku.ca). A. Ortega is with the Dept of ECE, University of Southern California, Los Angeles, CA. (email: aortega@usc.edu).}}
}
\maketitle
%
% The abstract is a short summary of the work to be presented in the article.
\begin{abstract}
A crucial assumption in graph signal processing (GSP) is the existence of an underlying graph that captures the pairwise similarities between nodes, allowing filters to be designed based on this graph for tasks such as denoising.
For spatial-temporal data in which node-to-node similarities evolve over time, a static spatial graph is insufficient.
In this paper, to represent slowly time-varying pairwise relationships, we model the graph changes in two consecutive adjacency matrices $\P = \W^{(2)} - \W^{(1)}$ across time as a low-rank matrix.  
Specifically, given an initial adjacency matrix $\W^{(1)}$ at time $t=1$, we jointly interpolate a signal $\x_2$ and estimate $\W^{(2)}$ at $t=2$ using both a graph signal smoothness prior for $\x_2$ and a low-rank prior on $\P$.
We alternate optimization steps. 
With $\W^{(2)}$ fixed, $\x_2$ is interpolated by solving a linear system. 
Alternatively, holding $\x_2$ fixed, $\W^{(2)}$ is updated via proximal gradient descent (PGD). 
The proximal mapping of the rank term $\Gamma(\W^{(2)} - \W^{(1)})$ is approximated in linear time using a fast orthogonal matching pursuit (OMP) algorithm that selects a sparse combination of atoms from a dictionary $\cR$ formed by the outer products of $\W^{(1)}$'s eigenvectors.
We unroll iterations of our algorithm into layers to build a lightweight neural network for limited data-driven parameter tuning.
Experiments show that our joint optimization achieves better signal interpolation compared to existing time-varying graph models.

\end{abstract}
%
% Keywords. The author(s) should pick words that accurately describe the work being
% presented. Separate the keywords with commas.
\begin{IEEEkeywords}
Time-varying graphs, graph signal interpolation, low-rank matrices, perturbation theory, algorithm unrolling
\end{IEEEkeywords}

%\begin{teaserfigure}
%  \includegraphics[width=\textwidth]{sampleteaser}
%  \caption{Seattle Mariners at Spring Training, 2010.}
%  \Description{Enjoying the baseball game from the third-base seats. Ichiro Suzuki preparing to bat.}
%  \label{fig:teaser}
%\end{teaserfigure}

%
% This command processes the author and affiliation and title information and builds
% the first part of the formatted document.
\maketitle

\section{Introduction}
\label{sec:intro}
\textit{Graph Signal Processing} (GSP) \cite{ortega18ieee,cheung18} studies computational tools for discrete signals residing on finite graphs, which serve as structured data kernels that capture pairwise similarities or correlations between nodes \cite{giannakis18,dong19}.
A statically defined graph\footnote{Though spectral analysis of \textit{signed} graphs with both positive and negative edges has been studied recently \cite{yang22,Dinesh2025}, we focus on the more common graphs with \textit{positive} edge weights in this paper.} allows filters to leverage defined one-hop neighborhoods for local filtering tasks, such as denoising, dequantization, and interpolation \cite{pang17,liu17,chen24}. However, 
For \textit{spatial-temporal data}, such as traffic and weather, these node-to-node similarities often evolve gradually over time. 
For example, crop yields of neighboring counties in the corn belt may become increasingly similar as farmers adopt common cultivation practices \cite{prasad06,Cai17}. 
%and land prices in adjacent counties may also converge as local markets respond to mutual cues.
Thus, the graph model should also be updated periodically to reflect these evolving pairwise similarity relationships. 
In this paper, we focus on learning \textit{slowly time-varying graphs} from spatial-temporal data.

Existing methods for modeling temporal changes in graphs can be grouped into three categories (discussed in detail in \autoref{sec:related}).
Some approaches model ``small" temporal graph changes via structural assumptions in the \textit{nodal} domain; for example, \cite{yamada19} assumes sparsity in graph topology updates (\ie, changes in only a few edges), while  \cite{kalofolias17} assumes small topology changes (\ie, changes in the adjacency matrix) in Frobenius norm over time.
These assumptions help reduce the amount of data needed to update the model parameters.
They differ mainly in how to mathematically describe ``small'' changes in slowly time-varying graphs. 

% \begin{table}[t!]
%  \caption{When $\W^{(t+1)}$ differs from $\W^{(t)}$ by a random low-rank matrix, approximating the matrix using outer-products of $\W^{(t)}$'s eigenvectors is better than outer-products of random vectors, for various graph sizes $N$. (Errors are the difference between reconstructed and true $\W^{(t+1)}$ in Frobenius norm.) }
% \label{tab:EV_outer}
%     \centering
%     \begin{tabular}{c|ccccc}
%     approx. scheme & $N=50$ & $70$ & $100$ & $150$ & $200$ \\ \hline
%     random vectors & 0.988  & 0.992 & 0.994 & 0.996 & 0.997 \\
%     e-vectors & 0.355 & 0.309 & 0.253 & 0.216 & 0.191 
%     \end{tabular}
% \end{table}

In contrast to \cite{yamada19} and \cite{kalofolias17}, to capture slowly evolving time-varying dynamics in spatial-temporal data, we assume that the \textit{difference} between two consecutive estimated adjacency matrices, $\P = \W^{(t+1)} - \W^{(t)}$, is \textit{low-rank}. 
%$\W^{(t)}$ and $\W^{(t+1)}$ at two consecutive time instants $t$ and $t+1$ 
%is low-rank. 
This is a good assumption \textit{when only a few nodes change behavior}; for example, a county adopting new tax-relief regulations could lead to higher farmland valuations than its neighbors---a rank-2 change (\ie, one row/column update).
Note that this rank-2 update matrix $\P$ may not be sparse as assumed in \cite{yamada19} or small in Frobenius norm as assumed in \cite{kalofolias17}.

While the low-rank prior has been used in applications such as recommendation systems \cite{sarwar02} and community detection \cite{tong08}, it has not been used to model the changes between two consecutive similarity graphs\footnote{\cite{sarwar02,tong08,kanada18} assumed that the adjacency matrix itself is low-rank, which was empirically shown in \cite{bagheri24} to be a poorer statistical fit for typical connected weighted graphs in observed datasets without obvious $k$-partite structures.}. 
%which we verify empirically using real-world data. 
%It can be shown experimentally, 
Compared to works that restrict changes in nodal or spectral-domain structures \cite{chentong16,Li17,cao15,kalofolias17,yamada19}, it has been experimentally shown that the low-rank prior remains effective even with limited observed data \cite{candes11}.

Next, given that the graph update matrix $\P$ is low-rank\footnote{\cite{Fernandez2025} studies rank-1 updates of Laplacian matrices of path graphs, while we study low-rank updates of adjacency matrices of general positive graphs.}, we show that it can be well approximated by a sparse linear combination of outer-products of eigenvectors of $\W^{(t)}$:
\begin{align}
 \P \approx \sum_{i \in \cI} a_i \v_i \v_i^\top + b_i (\v_1 \v_i^\top + \v_i \v_1^\top)  
\end{align}
where $\v_i$ is the eigenvector corresponding to the $i$-th largest eigenvalue $\lambda_i$ of $\W^{(t)}$, and $\cI$ is an index set to denote selected rank-1 matrices from dictionary 
$\cD = \{\v_i \v_i^\top, \v_1 \v_i^\top, \v_i \v_1^\top \}_{i=1}^N$. 
%As illustration, assuming that $\W^{(t+1)}$ differs from $\W^{(t)}$ by a random low-rank matrix, Table\;\ref{tab:EV_outer} shows that approximating $\W^{(t+1)} - \W^{(t)}$ using outer-products of eigenvectors of $\W^{(t)}$ for low-rank updates results in much smaller errors compared to outer-products of random vectors, especially for large graph sizes. 
In so doing, \textit{we are replacing a low-rank matrix estimation problem with a less expensive sparse coding problem.}
One intuitive explanation is that the \textit{Perron vector} $\v_1$ corresponding to the largest eigenvalue $\lambda_1$ captures node importance (\textit{eigenvector centrality} \cite{bonacich87}), so the outer-product $\v_1 \v_1^\top$ captures interactions between important nodes, while $\v_1 \v_i^\top$ captures interactions between important nodes and other nodes.
In more mathematically rigorous terms, we derive a proof for this sparse coding approximation via the \textit{Concentration of Measure Phenomenon} (CoMP) \cite{Ledoux2001} in Section\;\ref{subsec:justify}.

To better leverage limited signal observations, previous GSP works \cite{liu17,bai19} have shown that \textit{jointly} estimating a graph and restoring its associated signal can improve overall recovery performance. 
%compared to treating those steps independently.
Inspired by these works, observing only partial signal $\y_2$ at time $t=2$, we jointly interpolate signal $\x_2$ from $\y_2$ \textit{and} estimate the updated adjacency matrix $\W^{(2)}$, given signal $\x_1$ and an adjacency matrix $\W^{(1)}$ at time $t=1$. 
Our model incorporates both a graph signal smoothness prior---the quadratic \textit{graph Laplacian regularizer} (GLR) \cite{pang17} for $\x_2$—to capture spatial similarities, and a low-rank prior for $\P = \W^{(2)}-\W^{(1)}$ to model temporal changes in spatial similarities.
We alternate between computing $\x_2$ by solving a linear system when $\W^{(2)}$ is fixed, and updating $\W^{(2)}$ via \textit{proximal gradient descent} (PGD) \cite{parikh13} when $\x_2$ is fixed.
In particular, we compute the proximal mapping of the rank term $\Gamma(\P)$ approximately in linear time via a fast \textit{orthogonal matching pursuit} (OMP) algorithm \cite{Tropp2004} that sparsely selects rank-1 matrices in dictionary $\cD$---corresponding to eigenvectors of only the largest eigenvalues, avoiding the need for computationally-expensive full eigen-decomposition. %where the few rank-1 update matrices represent interactions between important nodes.

We unroll iterations of our algorithm into layers to build a lightweight neural net for limited data-driven parameter tuning.
Experiments show that our proposal outperforms existing time-varying graph models, such as \cite{yamada19} and \cite{kalofolias17}, in signal interpolation for both synthetic and real-world datasets. 

% In experiments on two real-world datasets, our joint optimization produced better signal interpolation than existing graph estimation schemes \cite{sarwar02,kalofolias17,yamada19}. 

Compared to our conference version \cite{bagheri24}, we make several notable improvements:
\begin{itemize}
\item Instead of convexifying the matrix rank $\Gamma(\W^{(2)} - \W^{(1)})$ to the nuclear norm $\|\W^{(2)} - \W^{(1)}\|_*$, whic has complexity $\cO(N^3)$ to compute, we minimize the rank term directly via an OMP algorithm that sparsely selects rank-1 matrices from the dictionary $\cD$
%---composed of outer-product of $\W^{(1)}$'s eigenvectors---
to construct $\P = \W^{(2)} - \W^{(1)}$. 
A fast variant of OMP executes in linear time.
%\item We further propose a fast greedy approximation algorithm that computes only $B$ eigenvectors of $\W^{(1)}$ corresponding to the $B$ largest eigenvalues and reduces complexity to $\cO(N)$, for $B \ll N$. 
\item To justify our approximation of the low-rank update matrix $\P = \W^{(2)} - \W^{(1)}$ using a sparse combination of outer-products of $\W^{(1)}$'s eigenvectors, we provide a mathematical proof based on the CoMP \cite{Ledoux2001}.
%\textit{Concentration of Measure Phenomenon} (CoMP) \cite{Ledoux2001}.
The approximation defaults to the outer-product $\v_1 \v_1^\top$ of the Perron vector $\v_1$ that captures interactions among important nodes.
\item We incorporate a diagonal matrix $\Q$ into the fidelity term,  
%as $(\y_2 - \H^{(2)} \x_2)^\top \Q (\y_2 - \H^{(2)} \x_2)$, 
where the diagonal entries $Q_{i,i}$'s specify reliabilities of individual observed samples.
\item We unroll iterations of our algorithm into layers of a feed-forward network, enabling end-to-end learning of prior weights and reliability weights $Q_{i,i}$'s, as well as a shallow \textit{graph convolutional network} (GCN) to initialize the adjacency matrix $\W^{(1)}$.
\end{itemize}

The paper outline is as follows.
We first provide an overview of related work in \autoref{sec:related}.
We formulate our joint graph learning/signal interpolation problem in \autoref{sec:formulate}.
We develop our alternating algorithm in \autoref{sec:algo} and unroll it into a neural net in \autoref{sec:unrolling}.
Experiments and conclusion are presented in \autoref{sec:results} 
and \autoref{sec:conclude}, respectively. 

\section{Related Work}
\label{sec:related}
\subsection{Time-Varying Graph Learning}

Real-world scenarios often involve evolving node-to-node similarities \cite{prasad06,Cai17}, necessitating that similarity graphs also vary over time. 
Static graphs fail to reflect these changes, potentially leading to inaccurate representations of pairwise relationships. 
To address this, existing approaches for parameterizing slowly time-varying graphs fall into three categories.

The first category involves statistical approaches, 
where inverse covariance (precision) matrices---interpreted as graph Laplacians---describe the changing second-order statistics over time. 
Covariance matrices are re-estimated at different time instances under the assumption of “smooth” changes in their entries \cite{natali22,Ying23}.
However, these methods assume that there is sufficient data to reliably re-estimate covariance matrices, a condition that may not be met in data-starved scenarios.

The second category makes structural assumptions in the \textit{nodal} domain, such as smooth temporal variations or sparsity in the graph topology \cite{kalofolias17,yamada19}. For example, \cite{kalofolias17} assumes small Frobenius-norm changes in the adjacency matrix, while \cite{yamada19} imposes an $\ell_1$-regularization term to ensure that only a few edges adjust over time.

The third category focuses on the \textit{spectral} domain, assuming that the first $k$ singular vectors of the adjacency or Laplacian matrix remain stable while re-estimating the remaining $N-k$ \cite{chentong16,Li17,sarwar02}. This strategy preserves a core low-rank structure and updates finer details as the graph evolves.

Beyond these parametric methods, \textit{graph neural networks} (GNNs) provide a data-driven framework for modeling time-varying graphs. 
By combining temporal modeling (\eg, recurrent units) with graph convolutions, GNNs capture both evolving spatial and temporal patterns \cite{yu18,Zhao20,zheng19}. 
Dynamic GNNs \cite{manessi20,li19,goyal18} and graph lifelong learning methods \cite{chen22,zhou21,shenglisun19} adapt to changing topologies over time. 
However, their shortcomings include a lack of interpretability for the solutions obtained and the need for substantial training datasets to tune large numbers of parameters, making them unsuitable for data-starved scenarios.

%\vspace{-0.1in}
\subsection{Graph Signal Restoration}

Interpolation is a type of signal restoration, and graph-based signal restoration methods leverage graph-structured data kernels to facilitate recovery.
A common graph smoothness prior is the \textit{graph Laplacian regularizer} (GLR) \cite{pang17}, which quantifies signal smoothness over a defined graph kernel, and has been used for denoising \cite{pang17}, dequantization \cite{liu17}, and interpolation \cite{chen24}. We employ GLR in our interpolation framework.

%Subsequently, the \textit{Left Eigenvectors of the Random Walk Graph Laplacian} (LeRAG) \cite{liu17} introduces an alternative smoothness prior. It utilizes the left eigenvectors of the random walk graph Laplacian $\L_r \triangleq \D^{-1}\L$, which provides similar low-pass filtering properties as GLR but with normalized frequencies. This method was used for soft decoding of JPEG-compressed images.

\textit{Graph total variation} (GTV) \cite{couprie13,berger17} minimizes the weighted absolute differences of samples between connected nodes---an $\ell_1$-norm counterpart to GLR. 
It was shown that the signal-dependent GTV converges faster to piecewise-constant (PWC) signal reconstruction and was successfully employed in image deblurring \cite{bai19}. 
We employ GLR instead for signal interpolation due to its convenient differential quadratic form.

%While minimizing signal-dependent GLR / GTV promotes PWC signal reconstruction, it can suffer from the well-known “staircase” effect in slowly-varying spatial regions. 
%To alleviate this effect, a higher-order graph smoothness prior called the \textit{gradient graph Laplacian regularizer} (GGLR) \cite{chen24} has been proposed for image restoration problems. 
%By capturing local horizontal and vertical gradients using structure tensors \cite{knutsson11}, 
%GGLR mitigates noise-related challenges and reduces staircase artifacts. 
%GGLR promotes piecewise planar (PWP) reconstructions that yield higher-fidelity signal recovery compared to GLR.

\vspace{-0.1in}
\subsection{Graph Algorithm Unrolling}

To the best of our knowledge, the first deep unrolling of a graph-based algorithm is \textit{deep graph Laplacian regularizer} (DGLR) for image denoising \cite{zeng19}.
Subsequent works used \textit{deep graph total variation} (DGTV) for natural image denoising \cite{vu21} and light field image denoising \cite{yoshida22}.
In \cite{nakahama22}, the authors unrolled an ADMM-based iterative algorithm that minimizes an objective regularized by both GTV and GLR, but the underlying graph is statically fixed \textit{a priori} and not learned from data. 
Recently, \cite{Do2024} showed that a graph learning module with edge weight normalization, whose parameters are tuned using data, is akin to the self-attention mechanism in a transformer, and thus unrolling a graph-based algorithm results in a lightweight transformer-like neural net. 
The learning module for the initial graph in our work is inspired by \cite{Do2024}, but because of our assumed data-starved scenario, we rely on our proposed low-rank prior in updating each adjacency matrix $\W^{(t)}$ at instant $t$ to $\W^{(t+1)}$ at next instant $t+1$.

% \section{Preliminaries}
% \label{sec:prelim}
% \input{prelim.tex}

\section{Problem Formulation}
\label{sec:formulate}
%\vspace{-0.05in}
\subsection{GSP Definitions}

A graph $\cG(\cN,\cE,\W)$ contains a node set $\cN = \{1, \ldots, N\}$ of $N$ nodes and an edge set $\cE$, where edge $(i,j) \in \cE$ has weight $W_{i,j}$ specified by the $(i,j)$-th entry of the \textit{adjacency matrix} $\W$. 
If $(i,j) \not\in \cE$, then $W_{i,j} = 0$. 
We assume that $\cG$ has no self-loops, and thus $W_{i,i} = 0, \forall i$.
We assume an undirected graph, and thus $W_{i,j} = W_{j,i}, \forall i,j$, and $\W$ is symmetric.
A diagonal \textit{degree matrix} $\D \in \mathbb{R}^{N \times N}$ has diagonal entries $D_{i,i} = \sum_{j} W_{i,j}$.
Finally, a \textit{combinatorial graph Laplacian matrix} is defined as $\L \triangleq \D - \W$ \cite{ortega18ieee}. 
For a \textit{positive} graph, where $w_{i,j} \geq 0, \forall i,j$, it can be proven \cite{cheung18} that $\L$ is \textit{positive semi-definite} (PSD), \ie, $\x^\top \L \x \geq 0, \forall \x \in \mathbb{R}^N$, or equivalently, all eigenvalues $\{\lambda_k\}$ of $\L$ are non-negative. 

\subsection{Graph Laplacian Regularizer}
%\paragraph*{Graph Laplacian Regularizer}

The extent to which a signal $\x \in \mathbb{R}^N$ is smooth w.r.t. a graph $\cG$ specified by Laplacian $\L$ can be quantified using the \textit{graph Laplacian regularizer} (GLR) \cite{pang17}:
\begin{align}
\x^\top \L \x = \sum_{(i,j) \in \cE} w_{i,j} (x_i - x_j)^2 .   
\end{align}
The GLR can be used as a signal prior for graph signal restoration tasks such as denoising \cite{pang17}, dequantization \cite{liu17}, and interpolation \cite{chen24}. 
Specifically for interpolation, denote by $\H \in \{0,1\}^{M \times N}$ a sampling matrix that picks out $M$ observable samples from $N$, where $M < N$.
The interpolation problem can be formulated as
\begin{align}
\min_\x \|\y - \H \x\|^2_2 + \mu \, \x^\top \L \x,
\label{eq:interpolate}
\end{align}
where $\mu \in \mathbb{R}_+$ is a weight parameter trading off the fidelity term and prior. 
Given that $\L$ is PSD, the solution $\x^*$ to \eqref{eq:interpolate} can be computed by solving the following linear system:
\begin{equation}
(\H^\top \H + \mu \L) \x^* = \H^\top \y .   
\label{eq:linSys}
\end{equation}
If $\L$ is sparse, then \eqref{eq:linSys} can be solved in linear time using \textit{conjugate gradient} (CG) \cite{shewchuk94}, given that the coefficient matrix $\H^\top \H + \mu \L$ is symmetric, PD, and sparse.

\subsection{Time-varying Graph Signal Interpolation}
\label{subsec:interpolationformula}

We consider an interpolation scenario, where signal $\x_1$ and the underlying graph $\cG_1$---specified by a (sparse\footnote{Graph models used in the GSP literature are typically sparse for computation reasons \cite{ortega18ieee,cheung18}. 
We will assume the initial $\W^{(1)}$ is sparse, and ensure that the subsequent constructed $\W^{(t)}$'s for $t>1$ are also sparse in our algorithm design.}) adjacency matrix $\W^{(1)}$---at time instant $t=1$ are known.
This is practical for two possible scenarios. 
In the first scenario, \textit{full} signal observations from instants $t\leq 1$ are available by $t=1$.
In the second scenario, at time $t=T$, we first collect available \textit{partial} observations at $t \in \{1, \ldots, T\}$ and construct a rough estimate $\x_1$ at $t=1$, and then we interpolate one signal at a time starting at $t=2$.
In either case, a single empirical covariance matrix can be computed from the collected data, and in turn, a sparse inverse covariance matrix (interpreted as graph Laplacian) can be estimated using techniques such as GLASSO and CLIME \cite{friedman08,caitony11}, leading to $\W^{(1)}$.
An alternative is to compute initial $\W^{(1)}$ in a data-driven manner via a shallow GCN, as described in Section\;\ref{subsec:initialGraph}.

Then, at instants $t \in \{2, \ldots, T\}$, we assume that partial signal observations $\y_t \in \mathbb{R}^{M_t}$, $M_t < N$, are available\footnote{$\y_t$ can also be viewed as a set of samples observed between discrete time instants $t-1$ and $t$.}.
The goal is to interpolate $\y_t$ to $\x_t \in \mathbb{R}^N$ for each instant $t$, assuming that $\x_t$ is smooth with respect to (w.r.t.) to graph $\cG_t$, and that the adjacency matrices $\W^{(t)}$'s are slowly time-varying. 
Denote by $\X \in \mathbb{R}^{N \times T}$ a matrix variable that contains $\x_t$ as the $t$-th column.  
Our problem can be interpreted as \textit{matrix completion}: given sparse entries $\y_t$ in each column $t$, complete the remaining $N-M_t$ entries to restore the entire $N$-by-$T$ data matrix $\X$.

A matrix can be completed assuming a low-rank prior using methods such as \textit{robust principal component analysis} (RPCA) \cite{candes11}. 
In contrast, by assuming that each signal $\x_t$ at time $t$ is smooth w.r.t. underlying graph $\cG_t$ that is slowly time-varying, we model spatial similarities of each $\x_t$ more specifically, resulting in better signal interpolation.
See Section\;\ref{sec:results} for empirical comparison among competing schemes. 

\vspace{0.05in}
\noindent
\textbf{Concrete Example}:
One real-world example of this setting is \textit{farmland valuation}: 
Given that crop-growing fields are sold only in some counties of a farming state like Iowa in a given quarter, the problem is to estimate field prices per acre in other counties with no recorded sales. 
Pairwise similarities between neighboring counties (with similar environmental variables such as rainfall and soil composition) can change slowly over time due to regional factors, such as migrating population, county-specific tax policies, and farming practices.

\subsection{Slowly Time-Varying Graph Model}

In our slowly time-varying graph model, 
we assume that the difference between consecutive symmetric adjacency matrices $\W^{(t+1)}$ and $\W^{(t)}$, corresponding to graphs $\cG_{t+1}$ and $\cG_{t}$  at times $t+1$ and $t$, can be approximated as  a low-rank matrix of rank $K$, $K \ll N$, \ie,
\begin{align}
\W^{(t+1)} = \W^{(t)} + \sum_{k=1}^K \alpha_k^{(t)} \u_k^{(t)} (\u_k^{(t)})^\top,
\label{eq:lowRank}
\end{align}
where each pair of scalar $\alpha_k^{(t)} \in \mathbb{R}$ and unit-norm vector $\u_k^{(t)} \in \mathbb{R}^N$ specify a rank-1 update matrix.

\vspace{0.05in}
\noindent
\textbf{Concrete Example:}
Continuing our farmland valuation example, consider the case where a county $i$'s average field price increases at time $t+1$ thanks to a new municipal tax exemption. 
Consequently, county $i$'s pairwise similarities with \textit{all} other counties change----a \textit{rank-$2$ update} in $\W^{(t+1)}$ due to changes in row $i$ and column $i$.
Note that this rank-$2$ update may not be small in Frobenius norm $\|\W^{(t+1)} - \W^{(t)}\|^2_F$ \cite{kalofolias17}, or small in $\ell_0$-norm $\|\W^{(t+1)} - \W^{(t)}\|_0$  \cite{yamada19}), as previous models assumed. 
We validate this low-rank assumption \eqref{eq:lowRank} for small time intervals $[t,t+1]$ empirically in Section\;\ref{sec:results} using real-world datasets.

%\vspace{0.05in}
With our time-varying graph model \eqref{eq:lowRank}, signal $\x_t$ can be characterized statistically given adjacency matrix $\W^{(t)}$ and signal $\x_{t-1}$ as
%\vspace{-0.05in}
%\begin{small}
\begin{align}
\text{Pr}(\x_t|\x_{t-1}) = \cN(\x_t | \0, (\L^{(t)} + e \I)^{-1}) \, \cN(\x_t|\x_{t-1}, \beta \I) ,
\label{eq:signalModel}
\end{align}
%\end{small}\noindent 
where $e \in \mathbb{R}_+$ is a small positive parameter to ensure invertibility, given that the combinatorial graph Laplacian $\L^{(t)}$ has an eigenvalue zero. 
\eqref{eq:signalModel} states that probability $\text{Pr}(\x_t|\x_{t-1})$ is proportional to \textit{both} the Gaussian-distributed probability of $\x_t$ with zero mean and covariance $(\L^{(t)} + \epsilon \I)^{-1}$ \textit{and} the iid Gaussian-distributed probability of $\x_t$ with mean at $\x_{t-1}$ and variance $\beta$.

\subsection{Problem Formulation}

Consider the isolated case when we seek to reconstruct signal $\x_2 \in \mathbb{R}^N$ at time $t=2$, given signal $\x_1 \in \mathbb{R}^N$ and adjacency matrix $\W^{(1)} \in \mathbb{R}^{N \times N}$ at time $t=1$, as well as partial observation $\y_2 \in \mathbb{R}^{M_2}$ at time $t=2$, where $M_2 < N$. 
Denote by $\Q \in \mathbb{R}^{M_2 \times M_2}$ a diagonal matrix where $Q_{i,i}$ specifies the reliability of observation $y_i$ (to be discussed in detail later).
Denote by $\H^{(2)} \in \{0,1\}^{M_2 \times N}$ a sampling matrix that picks out $M_2$ observed entries from $\x_2$, \ie,
\begin{align}
H_{i,j} = \left\{ \begin{array}{ll}
1 & \mbox{if the $i$-th sample is index $j$} \\
0 & \mbox{o.w.}
\end{array} \right. .
\end{align}
We formulate a joint optimization to obtain the adjacency matrix $\W^{(2)}$ \textit{and} the signal $\x_2$. 
Given signal model \eqref{eq:signalModel}, for $\x_2$ we employ GLR\footnote{We employ the  GLR $\x^\top \L \x$ as a smoothness prior because its $\ell_2$-norm form is amenable to fast optimization. Other graph signal priors, such as graph total variation (GTV) \cite{bai19}, can also be used.} $\x_2^\top \L^{(2)} \x_2$ \cite{pang17} and $\ell_2$-norm $\|\x_2 - \x_1\|^2_2$ to promote graph smoothness and similarity to $\x_1$.
Given time-varying graph model \eqref{eq:lowRank}, for $\W^{(2)}$ we use a low-rank prior $\Gamma(\W^{(2)} - \W^{(1)})$ to promote similarities between $\W^{(2)}$ and $\W^{(1)}$.
This results in the following optimization formulation:
\begin{align}
\min_{\W^{(2)}, \x_2} & (\y_2 - \H^{(2)} \x_2)^\top \Q (\y_2 - \H^{(2)} \x_2) + \mu \, \x_2^\top \L^{(2)} \x_2   \nonumber \\
&+ \xi \|\x_2 - \x_1\|^2_2 + \eta \, \Gamma(\W^{(2)} - \W^{(1)}),  \nonumber \\ 
\mbox{s.t.} & ~~ \L^{(2)} = \mathrm{diag}(\W^{(2)} \1) - \W^{(2)} 
\nonumber \\
& ~~ W^{(2)}_{i,i} = 0, ~\forall i, ~~~ W^{(2)}_{j,i} = W^{(2)}_{i,j} \geq 0, ~\forall i \neq j
\label{eq:jointForm}
\end{align}
where $\mu, \xi, \eta \in \mathbb{R}_+$ are weight parameters, and $\Gamma(\M)$ is the rank of matrix $\M$. 
%The first constraint defines matrix $\M$ that is a low-rank update from $\W^{(1)}$, which has a small Frobenius norm from $\W^{(2)}$, according to the low-rank model in \eqref{eq:lowRank}. 
The first constraint ensures a properly defined combinatorial Laplacian matrix $\L^{(2)}$ from $\W^{(2)}$.
The last two constraints ensure that $\W^{(2)}$ is an adjacency matrix for a positive undirected graph $\cG_2$ without self-loops. 
%The fifth constraint ensures that the node degree total for $\cG^{t+1}$ remains no smaller than $\cG^t$ (and thus prevents the pathological solution when all nodes become disconnected and GLR goes to zero).

\section{Algorithm Development}
\label{sec:algo}
\subsection{Alternating Optimization Approach}
\label{Alternating Optimization Algorithm}

We solve \eqref{eq:jointForm} via an alternating approach: solve for one variable while holding the other fixed, and vice versa, until solution convergence.  
When $\W^{(2)}$ is fixed, \eqref{eq:jointForm} becomes an unconstrained convex \textit{quadratic programming} (QP) problem, whose solution can be obtained by solving a system of linear equations similar to \eqref{eq:linSys}:

\vspace{-0.05in}
\begin{small}
\begin{align}
\left( (\H^{(2)})^\top \Q \H^{(2)} + \mu \L^{(2)} + \xi \I \right) \x_2^* &= (\H^{(2)})^\top \Q \y_2 + \xi \x_1 .
\label{eq:linSys1}
\end{align}
\end{small}\noindent
Assuming that $\L^{(2)}$ is sparse (graph model $\cG_t$ is assumed sparse), the coefficient matrix $(\H^{(2)})^\top \Q \H^{(2)} + \mu \L^{(2)} + \xi \I$ is symmetric, PD, and sparse, and we can compute optimal $\x_2^*$ via CG in linear time, as described previously. 

When $\x_2$ is fixed, we rewrite the optimization \eqref{eq:jointForm} as follows.
We first define an \textit{indicator function} $\text{I}_W(\W)$ for matrix $\W$ as

\vspace{-0.1in}
\begin{small}
\begin{align}
\text{I}_W(\W) &= \left\{
\begin{array}{ll}
0, & \mbox{if}~~ W_{i,i} = 0, \forall i, ~
W_{j,i} = W_{i,j} \geq 0, \forall i \neq j
\\
\infty, & \mbox{o.w.}
\end{array} 
\right. .
\label{eq:ind}
\end{align}
\end{small}\noindent
Denote by $\cC$ the set of matrices such that $\text{I}_W(\W) = 0$, \ie, $\cC = \{\W \in \mathbb{R}^{N \times N} \,|\, \text{I}_W(\W) = 0 \}$.
We prove that $\cC$ is convex. 
\begin{lemma}
$\cC$ is a convex set.
\label{lemma:convexSet}
\end{lemma}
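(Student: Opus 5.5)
The plan is to verify convexity directly from the definition: take two arbitrary matrices $\W_a, \W_b \in \cC$ and a scalar $\theta \in [0,1]$, form the convex combination $\W_\theta = \theta \W_a + (1-\theta)\W_b$, and show that $\text{I}_W(\W_\theta) = 0$. Each defining condition of $\cC$ in \eqref{eq:ind} will be checked in turn, entrywise.

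First, for the zero-diagonal condition, I note that $(\W_\theta)_{i,i} = \theta (\W_a)_{i,i} + (1-\theta)(\W_b)_{i,i} = 0$ for every $i$. Second, for symmetry, linearity of the transpose gives
\begin{align}
\W_\theta^\top = \theta \W_a^\top + (1-\theta)\W_b^\top = \theta \W_a + (1-\theta)\W_b = \W_\theta .
\end{align}
Third, for non-negativity of the off-diagonal entries, each $(\W_\theta)_{i,j}$ with $i \neq j$ is a combination of two non-negative numbers with non-negative coefficients $\theta$ and $1-\theta$, and is therefore non-negative.

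An equivalent and slightly cleaner route, which I would mention as the organizing idea, is to write $\cC$ as an intersection of simple convex sets. The linear subspace $\{\W : W_{i,i} = 0\ \forall i\}$ is one. The subspace of symmetric matrices $\{\W : \W = \W^\top\}$ is another. Finally, there are the half-spaces $\{\W : W_{i,j} \geq 0\}$ for $i \neq j$. Subspaces and half-spaces are convex, and any intersection of convex sets is convex, which gives the lemma immediately.

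There is no real obstacle here; the only point needing care is to confirm that every constraint defining $\cC$ is either a linear equality or a linear inequality in the entries of $\W$. This holds because no constraint (for example, a rank or degree condition) is non-convex. Once that is observed, the result follows from the two elementary facts above.
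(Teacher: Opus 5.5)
Your proof is correct and follows essentially the same route as the paper: form a convex combination of two members of $\cC$ and check entrywise that the zero diagonal, symmetry, and off-diagonal non-negativity are preserved. Your remark that $\cC$ is an intersection of linear subspaces and half-spaces is an equivalent, slightly more structural way of making the same argument.
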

\begin{proof}
Denote by $\W_1, \W_2 \in \cC$ two matrices in set $\cC$.  
Consider $\W = \rho \W_1 + (1-\rho) \W_2$ for $0 \leq \rho \leq 1$, a convex combination of $\W_1$ and $\W_2$. 
Diagonal entry $i$ of $\W$ is
\begin{align}
W_{i,i} &= \rho W_{1,i,i} + (1-\rho) W_{2,i,i} 
\nonumber \\
&= \rho 0 + (1-\rho) 0 = 0 .
\end{align}
On the other hand, off-diagonal entry $W_{i,j}$ is
\begin{align}
W_{i,j} &= \rho W_{1,i,j} + (1-\rho) W_{2,i,j} \geq 0
\\
W_{j,i} &= \rho W_{1,j,i} + (1-\rho) W_{2,j,i} 
\nonumber \\
&= \rho W_{1,i,j} + (1-\rho) W_{2,i,j} = W_{i,j} .
\end{align}
Thus, $\text{I}_W(\W) = 0$, and $\cC$ is a convex set.
\end{proof}

The implication of Lemma\;\ref{lemma:convexSet} is that $\text{I}_W(\W)$ is convex, albeit non-smooth.  
To rewrite the objective in \eqref{eq:jointForm}, we first rewrite GLR as
\begin{align}
\x_2^\top \L^{(2)} \x_2 &= 
\text{Tr}(\x_2^\top \L^{(2)} \x_2) = 
\text{Tr}(\L^{(2)} \x_2 \x_2^\top) .
\end{align}
Expanding out the definition of $\L^{(2)}$ as variable $\W^{(2)}$, we can now define an unconstrained optimization for $\W^{(2)}$ as
\begin{align}
\min_{\W^{(2)}} & ~~\mu \, \text{Tr}((\mathrm{diag}(\W^{(2)} \1) - \W^{(2)}) \x_2 \x_2^\top)  
\nonumber \\
& ~~ + \eta \, \Gamma(\W^{(2)} - \W^{(1)})
+ \text{I}_W(\W^{(2)}) .
\label{eq:uOpt}
\end{align}
The objective in \eqref{eq:uOpt} is composed of three terms: the first is convex and differentiable (smooth), the second is combinatorial,  and the third is convex but non-smooth.

\subsection{Proximal Gradient Descent}

We minimize the three terms in \eqref{eq:uOpt} one at a time via a variant\footnote{Because matrix rank $\Gamma(\cdot)$ is evaluated on the \textit{difference} of matrices $\W^{(2)} - \W^{(1)}$, not the matrix variable $\W^{(2)}$, the proximal mapping for $\W^{(2)} - \W^{(1)}$ is unique, and we cannot convexify it to a nuclear norm for simple soft-thresholding operations, as done in \cite{parikh13}.} of \textit{proximal gradient descent} (PGD) \cite{parikh13} as follows.

\vspace{0.1in}
\subsubsection{Gradient Descent}
\label{subsubsection:gradientDescent}

The first term can be expanded as
\begin{align}
\label{eq: firstterm}
\mu \, \text{Tr}(\mathrm{diag}(\W^{(2)} \1) (\x_2) (\x_2)^\top) -
\mu \; \text{Tr}(\W^{(2)} (\x_2) (\x_2)^\top) .
\end{align}
Given that $\mathrm{diag}(\a) = (\a \1^\top) \odot \I$ where $\a \in \mathbb{R}^{N}$, we can rewrite the first part as $ \mu \, \text{Tr}(((\W^{(2)} \1 \1^\top) \odot \I) \x_2 \x_2^\top)$. 
Thus, the gradient $\nabla_{\W^{(2)}}$  of \eqref{eq: firstterm} w.r.t. $\W^{(2)}$ is 
\begin{align}
\nabla_{\W^{(2)}} = \mu \, (\x_2 \odot \x_2) \1^\top  - \mu \, \x_2 \x_2^\top .
\end{align}
Thus, to reduce the first term, we compute $\W^{(2)} - \epsilon \nabla_{\W^{(2)}}$, where $\epsilon > 0$ is a step size.

While the gradient $\nabla_{\W^{(2)}}$ for the first term in \eqref{eq:uOpt} is easily computable, reducing the second rank term via proximal mapping is more challenging.

\vspace{0.1in}
\subsubsection{Proximal Mapping for Matrix Rank Term}

By definition, the \textit{proximal mapping} \cite{parikh13} for the weighted rank function $\eta g(\W^{(2)}) = \eta \, \Gamma(\W^{(2)} - \W^{(1)})$ is

\vspace{-0.05in}
\begin{small}
\begin{align}
\text{prox}_{\eta g}(\M) &= \arg\min_{\Z} \frac{1}{2} \|\Z - \M\|^2_F + \eta \, \Gamma(\Z - \W^{(1)}) 
\label{eq:proxRank}
\end{align}
\end{small}\noindent
where $\M$ is a candidate solution for $\W^{(2)}$.
This formulation encourages $\Z$ to remain close to $\M$ in Frobenius norm while penalizing the rank of its deviation from $\W^{(1)}$.

The proximal mapping \eqref{eq:proxRank} is itself an optimization, and we efficiently approximate it as follows.
%\footnote{In the conference version \cite{bagheri24}, we first convexify rank $\Gamma(\W^{(2)} - \W^{(1)})$ to the nuclear norm---the sum of singular values---with complexity $\cO(N^3)$. In this work, we design a fast OMP algorithm addressing the rank term directly with complexity $\cO(N)$.}.
Note first that $\W^{(1)}$ is an adjacency matrix for an undirected graph, and hence $\W^{(1)}$ is real and symmetric, thus eigen-decomposable by the Spectral Theorem \cite{HAWKINS75}. 
Denote by $\V \in \mathbb{R}^{N \times N}$ a matrix containing the orthonormal eigenvectors of $\W^{(1)}$ as columns, \ie, $\W^{(1)} = \V \bLambda \V^\top$, where $\bLambda \in \mathbb{R}^{N \times N}$ is a diagonal matrix $\bLambda = \mathrm{diag}(\lambda_1, \lambda_2, \ldots, \lambda_N)$, and $|\lambda_1| \geq |\lambda_2| \geq \cdots \geq |\lambda_N|$. 

%Second, given the assumed slowly time-varying nature of adjacency matrices \eqref{eq:lowRank}, we assume that the eigenvectors corresponding to non-zero eigenvalues of (possibly low-rank) difference matrix $\W^{(1)} - \M$ are similar to eigenvectors of $\W^{(1)}$. 

\vspace{0.05in}
\subsubsection{Full OMP Algorithm}

We first develop a full greedy algorithm for \eqref{eq:proxRank} mimicking the \textit{orthogonal matching pursuit} (OMP) algorithm \cite{Tropp2004}, albeit with high complexity; a low-complexity variant is presented in the sequel.
Note first that by spectral decomposition, we can write $\W^{(1)} = \sum_{i=1}^N \lambda_i \v_i \v_i^\top$ as a linear combination of rank-1 matrices (outer-product of orthonormal eigenvectors $\v_i$'s), each scaled by eigenvalue $\lambda_i$.
To achieve minimal matrix rank $\Gamma(\Z - \W^{(1)})$, we first initialize $\Z \leftarrow \W^{(1)}$, which means 
\begin{align}
\Gamma (\Z - \W^{(1)}) = \Gamma(\W^{(1)} - \W^{(1)}) = 0. 
\end{align}

Next, we project $\Z - \M$ onto rank-1 matrix $\v_1 \v_1^\top$ by computing a matrix inner-product:
\begin{align}
a_1 = \langle \Z - \M, \v_1 \v_1^\top \rangle \triangleq \text{Tr}((\Z-\M)\v_1 \v_1^\top) .
\end{align}
To maximally reduce Frobenius norm $\|\Z - \M\|_F^2$ using matrix component $\v_1 \v_1^\top$, we update variable $\Z$ to $\Z = (\lambda_1 - a_1) \v_1 \v_1^\top + \sum_{i=2}^N \lambda_i \v_i \v_i^\top$.
Consequently, rank of $\Z-\W^{(1)}$ is incremented by 1, \ie, $\Gamma(\Z - \W^{(1)}) = 1$.

To further reduce Frobenius norm $\|\Z - \M\|_F^2$ by gradually incrementing rank $\Gamma(\Z-\W^{(1)})$, we first define a \textit{dictionary} $\cD$ containing $3N-3$ rank-1 symmetric matrices: 
\begin{align}
\cD \triangleq \left\{ ~ \{\v_i \v_i^\top\}_{i=2}^N, ~\{\g_i \g_i^\top\}_{i=2}^N, ~\{\h_i \h_i^\top \}_{i=2}^N ~ \right\}
\label{eq:setR}
\end{align}
where $\g_i$ and $\h_i$, $i \in \{2, \ldots, N\}$, are defined as
\begin{align}
\g_i \triangleq \frac{1}{\sqrt{2}} \left( \v_1 + \v_i \right),
~~~
\h_i \triangleq \frac{1}{\sqrt{2}} \left( \v_1 - \v_i \right).
\label{eq:setR2}
\end{align}
At each iteration, we project $\Z - \M$ onto each \textit{triple} $i$ of three rank-1 matrices $(\v_i \v_i^\top, \g_i \g_i^\top, \h_i \h_i^\top)$ in $\cD$: 
\begin{align}
a_i &= \langle \Z - \M, \v_i \v_i^\top \rangle, ~~~ \forall i \in \{2, \ldots, N\}
\nonumber \\
b_i &= \langle \Z - \M - a_i \v_i \v_i^\top, \g_i \g_i^\top \rangle
\nonumber \\
c_i &= \langle \Z - \M - a_i \v_i \v_i^\top - b_i \g_i \g_i^\top, \h_i \h_i^\top \rangle .
\end{align} 
The index $i$ with the largest magnitude sum $|a_i|+|b_i|+|c_i|$ among $N-1$ triples constitutes approximation $a_i \v_i \v_i^\top + b_i \g_i \g_i^\top + c_i \h_i \h_i^\top$ of error $\Z-\M$ given dictionary $\cD$ that increases rank $\Gamma(\Z - \W^{(1)})$ by one, since $\v_i, \g_i, \h_i$ introduces a new direction $\v_i$ into the set of vectors that compose the rank-1 matrices constructing $\Z - \W^{(1)}$.

Specifically, denote by $\cI$ the index set that composes current \textit{correction matrix} $\R = \Z - \W^{(1)}$, \ie,
\begin{align}
\R = a_1 \v_1 \v_1^\top + \sum_{i \in \cI} \left( a_i \v_i \v_i^\top + b_i \g_i \g_i^\top + c_i \h_i \h_i^\top \right) .
\end{align}
To ensure that residual $\M - \R$ is orthogonal to the selected rank-1 components, we solve the following least-square problem to update coefficients $a_1$ and $a_i, b_i, c_i, \forall i \in \cI$:

\vspace{-0.1in}
\begin{scriptsize}
\begin{align}
\min_{a_1, a_i, b_i, c_i \in \cI} \left\| 
\W^{(1)} + a_1 \v_1 \v_1^\top + \sum_{i \in \cI} \left( a_i \v_i \v_i^\top + b_i \g_i \g_i^\top + c_i \h_i \h_i^\top \right) - \M
\right\|_F^2 
\label{eq:LS}
\end{align}
\end{scriptsize}\noindent
The optimal coefficients $\a$ can be obtained by solving a linear system:
\begin{align}
\V^\top \V \a = \V^\top \m
\label{eq:LS_soln}
\end{align}
where
\begin{small}
\begin{align}
\a = \left[ \begin{array}{c}
a_1 \\
a_i \\
\vdots \\
b_i \\
\vdots \\
c_i \\
\vdots 
\end{array} \right], ~
\V^\top = \left[ \begin{array}{c}
\text{vec}^\top(\v_1 \v_1^\top) \\
\text{vec}^\top(\v_i \v_i^\top) \\
\vdots \\
\text{vec}^\top(\g_i \g_i^\top) \\
\vdots \\
\text{vec}^\top(\h_i \h_i^\top) \\
\vdots 
\end{array} \right], ~
\m = \text{vec}(\W^{(1)} - \M)  .
\label{eq:LS_defn}
\end{align}
\end{small}\noindent
$\text{vec}(\M)$ vectorizes $\M$ into a column vector $\m$ in row-major order. 
See Appendix\;\ref{append:LS_soln} for a derivation.

We then check if objective \eqref{eq:proxRank} has decreased. 
If so, we continue to select the next index $i$ with the largest magnitude sum $|a_i|+|b_i|+|c_i|$ into the support. 
We continue till objective \eqref{eq:proxRank} increases, at which point we retain the previous solution $\Z$.
See Fig.\;\ref{fig:greedy_algo_flowchart} for a flow chart of the algorithm.

\begin{figure}[ht]
    \centering
    \includegraphics[width=0.5\textwidth]{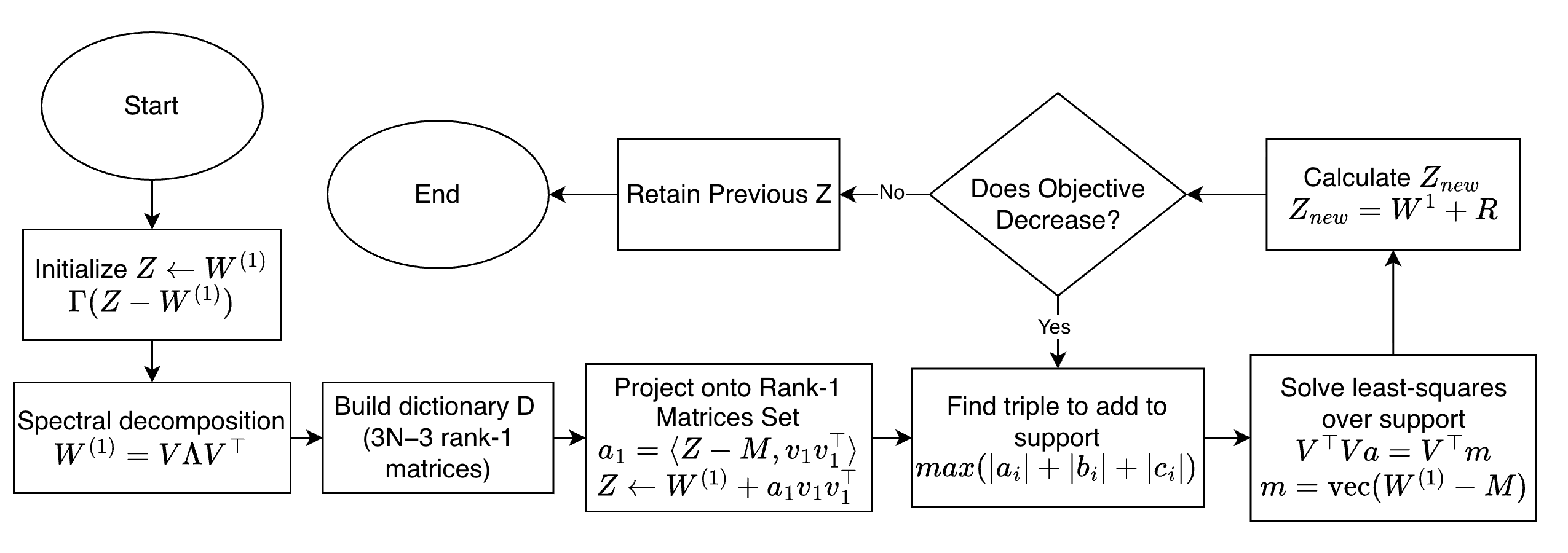}
    \vspace{-0.25in}
    \caption{Flow chart for the full OMP algorithm. }
    \label{fig:greedy_algo_flowchart}
\end{figure}

Next assumption is necessary for the full OMP algorithm to compute an optimal solution to \eqref{eq:proxRank} for a particular $\eta$:

\vspace{0.05in}
\noindent
\textbf{Assumption}: Matrix difference $\W^{(1)} - \M$ is $K$-sparse in the span of rank-1 matrix dictionary $\cD \cup \{\v_1 \v_1^\top\}$, \ie, $\W^{(1)} - \M = a_1 \v_1 \v_1^\top + \sum_{i \in \cI} (a_i \v_i \v_i^\top + b_i \g_i \g_i^\top + c_i \h_i \h_i^\top)$, where $|\cI| = K-1$, and $a_1 \neq 0$. 

\vspace{0.05in}
The assumption means that the slowly time-varying graph changes in consecutive adjacency matrices are well described by rank-1 matrices in $\cD \cup \{\v_1 \v_1^\top\}$.  
We defer the justification of this assumption to Section\;\ref{subsec:justify}. 
%We argue that the assumption is approximately true in practice.$\M$ is a candidate solution for $\W^{(2)}$, and because of the slowly time-varying assumption for consecutive adjacency matrices, $\W^{(2)}$ and $\W^{(1)}$ ought to be similar.

To show the algorithm's optimality, we first consider the following constrained variant of optimization \eqref{eq:proxRank}:
\begin{align}
\min_\Z \|\Z - \M \|^2_F,
~~\mbox{s.t.}~~ \Gamma(\Z - \W^{(1)}) \leq r .
\label{eq:rankConst}
\end{align}
The two optimizations \eqref{eq:proxRank} and \eqref{eq:rankConst} are related by the following two lemmas.

\begin{lemma}
Denote by $\Z^*$ an optimal solution to \eqref{eq:proxRank} for a given  $\eta^*$, such that $\Gamma(\Z^* - \W^{(1)}) = r^*$. 
$\Z^*$ is also an optimal solution to \eqref{eq:rankConst} for $r=r^*$.
\label{lemma:duality}
\end{lemma}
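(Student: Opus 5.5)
We argue by contradiction, using only the definitions of \eqref{eq:proxRank} and \eqref{eq:rankConst}. The first step is to check feasibility: $\Z^*$ satisfies $\Gamma(\Z^* - \W^{(1)}) = r^* \leq r^*$, so it is a feasible point of \eqref{eq:rankConst} with $r = r^*$.

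Next, suppose for contradiction that $\Z^*$ is not optimal for \eqref{eq:rankConst}. Then some $\Z'$ satisfies $\Gamma(\Z' - \W^{(1)}) = r' \leq r^*$ and
\begin{align}
\|\Z' - \M\|_F^2 < \|\Z^* - \M\|_F^2 .
\end{align}
We evaluate the objective of \eqref{eq:proxRank} at $\Z'$ with $\eta = \eta^*$:
\begin{align}
\frac{1}{2}\|\Z' - \M\|_F^2 + \eta^* r' < \frac{1}{2}\|\Z^* - \M\|_F^2 + \eta^* r^* .
\end{align}
The fidelity part is strictly smaller. The rank part satisfies $\eta^* r' \leq \eta^* r^*$ because $\eta^* \geq 0$ and $r' \leq r^*$. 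So $\Z'$ achieves a strictly smaller value of \eqref{eq:proxRank} than $\Z^*$, which contradicts the optimality of $\Z^*$. Hence $\Z^*$ is optimal for \eqref{eq:rankConst} with $r = r^*$.

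The key point is the monotonicity of the penalty $\eta^*\Gamma(\cdot)$ under the constraint $r' \leq r^*$. This is a Lagrangian-type argument, and it needs no convexity, so the combinatorial rank term causes no difficulty. The only care needed is the direction of the implication: penalized optimality implies constrained optimality. The converse generally fails for nonconvex rank penalties, but the lemma does not claim it.

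We should also confirm that both problems range over the same domain $\Z \in \mathbb{R}^{N\times N}$, since \eqref{eq:proxRank} as stated carries no indicator constraint. If the set $\cC$ were added to both problems, the same argument would go through unchanged.
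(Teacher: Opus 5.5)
Your proof is correct and is essentially the paper's argument. The paper restricts the optimality inequality for $\Z^*$ to the feasible set $\cS = \{\Z \,|\, \Gamma(\Z - \W^{(1)}) \leq r^*\}$ and drops the nonnegative term $\eta^*(r^* - \Gamma(\Z - \W^{(1)}))$, which is the same penalty monotonicity you use, argued directly rather than by contradiction; your explicit check that $\Z^*$ is feasible for \eqref{eq:rankConst} is a small point the paper leaves implicit.
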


\begin{proof}
By definition of optimality, $\Z^*$ satisfies
\begin{align}
& \frac{1}{2} \|\Z^* - \M\|^2_F + \eta^* \Gamma (\Z^* - \W^{(1)}) \leq 
\\
&~~~~~~~~~~ \frac{1}{2} \|\Z - \M\|^2_F + \eta^* \Gamma (\Z - \W^{(1)}), ~~\forall \Z
\nonumber 
\end{align}
Define $\cS \triangleq \{\Z \,|\, \Gamma(\Z - \W^{(1)}) \leq r^*\}$. 
Then, 
\begin{align}
& \frac{1}{2} \|\Z^* - \M\|^2_F + \eta^* \Gamma (\Z^* - \W^{(1)}) \leq 
\\
&~~~~~~~~~~ \frac{1}{2} \|\Z - \M\|^2_F + \eta^* \Gamma (\Z - \W^{(1)}), ~~\forall \Z \in \cS 
\nonumber \\
& \frac{1}{2} \|\Z^* - \M\|^2_F + \eta^* \left( r^* - \Gamma (\Z - \W^{(1)}) \right) \leq 
\\
&~~~~~~~~~~ \frac{1}{2} \|\Z - \M\|^2_F, ~~ \forall \Z \in \cS
\nonumber \\
& \|\Z^* - \M\|^2_F \stackrel{(a)}{\leq} \|\Z - \M\|^2_F, ~~ \forall \Z \in \cS
\end{align}
where $(a)$ is true since $\eta^* > 0$, and $r^* \geq \Gamma(\Z - \W^{(1)})$ for $\Z \in \cS$. 
\end{proof}

In general, an optimal solution to either \eqref{eq:proxRank} or \eqref{eq:rankConst} may not be unique.
Suppose we compute an optimal solution $\Z^o$ to \eqref{eq:rankConst} for $r=r^*$, which may be different from a solution $\Z^*$ of \eqref{eq:proxRank}.
The following lemma relates $\Z^o$ to \eqref{eq:proxRank}.

\begin{lemma}
Denote by $\Z^*$ an optimal solution to \eqref{eq:proxRank} for a given $\eta^*$, such that $\Gamma(\Z^*-\W^{(1)}) = r^*$. Denote by $\Z^o$ an optimal solution to \eqref{eq:rankConst} for $r=r^*$.
$\Z^o$ is also an optimal solution to \eqref{eq:proxRank} for the same $\eta^*$.
\label{lemma:duality2}
\end{lemma}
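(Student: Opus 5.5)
The plan is to compare the objective value of \eqref{eq:proxRank} at $\Z^o$ directly with its value at $\Z^*$. Then we use the optimality of $\Z^*$ for \eqref{eq:proxRank}, valid over all $\Z$, to conclude that $\Z^o$ attains the same minimal value.

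First, Lemma\;\ref{lemma:duality} gives that $\Z^*$ is feasible and optimal for \eqref{eq:rankConst} with $r=r^*$. Since $\Z^o$ is also optimal for \eqref{eq:rankConst} with the same $r^*$, the two optimal values coincide:
\begin{align}
\|\Z^o - \M\|^2_F = \|\Z^* - \M\|^2_F .
\end{align}
Second, feasibility of $\Z^o$ gives $\Gamma(\Z^o - \W^{(1)}) \leq r^* = \Gamma(\Z^* - \W^{(1)})$. Since $\eta^* > 0$, combining the two facts yields
\begin{align}
\frac{1}{2}\|\Z^o - \M\|^2_F + \eta^* \Gamma(\Z^o - \W^{(1)}) \leq \frac{1}{2}\|\Z^* - \M\|^2_F + \eta^* \Gamma(\Z^* - \W^{(1)}) .
\end{align}

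Third, the right-hand side is the minimum of \eqref{eq:proxRank} over all $\Z$. Therefore the left-hand side cannot be strictly smaller, so equality holds and $\Z^o$ attains the minimum. This shows $\Z^o$ is an optimal solution of \eqref{eq:proxRank} for $\eta^*$.

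As a byproduct, equality forces $\Gamma(\Z^o - \W^{(1)}) = r^*$. If the rank of $\Z^o - \W^{(1)}$ were strictly smaller, $\Z^o$ would give a strictly smaller objective than the optimum $\Z^*$, a contradiction.

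There is no real obstacle here. The only point needing care is to invoke Lemma\;\ref{lemma:duality} to equate the Frobenius terms, rather than merely bounding them in one direction. Without that equality one only gets $\|\Z^o - \M\|_F \leq \|\Z^* - \M\|_F$ from the optimality of $\Z^o$. That inequality alone would in fact suffice for the chain above, so the argument goes through either way.
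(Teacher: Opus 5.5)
Your proof is correct and follows the paper's argument essentially step for step: you use Lemma~\ref{lemma:duality} to equate the Frobenius terms, use feasibility of $\Z^o$ to get $\Gamma(\Z^o-\W^{(1)})\leq r^*$, and conclude from the optimality of $\Z^*$ over all $\Z$ (and, unlike the paper's write-up, you correctly keep the factor $\frac{1}{2}$ from \eqref{eq:proxRank}). Your closing remark is also right: since $\Z^*$ is itself feasible for \eqref{eq:rankConst} with $r=r^*$, the optimality of $\Z^o$ already gives the one-sided inequality you need, so Lemma~\ref{lemma:duality} is not actually required.
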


\begin{proof}
By Lemma\;\ref{lemma:duality}, $\Z^*$ is also an optimal solution to \eqref{eq:rankConst} for $r=r^*$, and thus by optimality to \eqref{eq:rankConst}, $\|\Z^*-\M\|^2_F = \|\Z^o-\M\|^2_F$.
Further, $\Gamma(\Z^*-\W^{(1)}) = r^* \geq \Gamma(\Z^o - \W^{(1)})$. 
Thus,
\begin{align}
\|\Z^o-\M\|^2_F + \eta^* \Gamma(\Z^o-\W^{(1)}) \leq
\nonumber \\
\|\Z^*-\M\|^2_F + \eta^* \Gamma(\Z^*-\W^{(1)}) ,
\end{align}
given $\eta^* > 0$.

By optimality of $\Z^*$ to \eqref{eq:proxRank},
\begin{align}
\|\Z^*-\M\|^2_F + \eta^* \Gamma(\Z^*-\W^{(1)}) \leq
\nonumber \\
|\Z-\M\|^2_F + \eta^* \Gamma(\Z-\W^{(1)}), ~~ \forall \Z .
\end{align}
Thus, we can conclude
\begin{align}
\|\Z^o-\M\|^2_F + \eta^* \Gamma(\Z^o-\W^{(1)}) \leq
\nonumber \\
|\Z-\M\|^2_F + \eta^* \Gamma(\Z-\W^{(1)}), ~~ \forall \Z,
\end{align}
and $\Z^o$ is an optimal solution to \eqref{eq:proxRank}.
\end{proof}

We now state a crucial corollary of Lemma\;\ref{lemma:duality} and \ref{lemma:duality2}.

\begin{corollary}
The solution set to \eqref{eq:rankConst} for all possible $r$'s is a \textit{superset} that contains a set of solutions to \eqref{eq:proxRank} for all possible $\eta$'s for $\eta > 0$.
\label{corollary:superset}
\end{corollary}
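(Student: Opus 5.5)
The plan is to deduce the corollary from Lemma\;\ref{lemma:duality} alone; Lemma\;\ref{lemma:duality2} will only strengthen the conclusion. Define, for each integer $r \in \{0, 1, \ldots, N\}$, the solution set $\cS_r^{c}$ of \eqref{eq:rankConst}, and let $\cS^{c} \triangleq \bigcup_{r} \cS_r^{c}$. Likewise, for each $\eta > 0$, let $\cS_\eta^{p}$ be the solution set of \eqref{eq:proxRank}, and let $\cS^{p} \triangleq \bigcup_{\eta > 0} \cS_\eta^{p}$. The claim is then the set inclusion $\cS^{p} \subseteq \cS^{c}$.

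The key step is a direct element-chasing argument. Take any $\Z^* \in \cS^{p}$, so $\Z^*$ is optimal for \eqref{eq:proxRank} for some $\eta^* > 0$. Set $r^* \triangleq \Gamma(\Z^* - \W^{(1)})$. This is an integer in $\{0, \ldots, N\}$ because the rank of an $N \times N$ matrix is bounded by $N$, so $r = r^*$ is an admissible value in the family \eqref{eq:rankConst}. Lemma\;\ref{lemma:duality} then says $\Z^*$ is optimal for \eqref{eq:rankConst} with $r = r^*$. Hence $\Z^* \in \cS_{r^*}^{c} \subseteq \cS^{c}$, which gives the inclusion.

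For completeness, I would add what Lemma\;\ref{lemma:duality2} contributes: whenever $\cS_\eta^{p}$ is nonempty, the \emph{entire} set $\cS_{r^*}^{c}$ lies inside $\cS_\eta^{p}$. So for the relevant ranks the constrained solution set coincides with the penalized one, and the containment is "superset" in the sense that ranks $r$ never attained by any penalized optimum may contribute extra solutions. This matters algorithmically: sweeping $r$ in \eqref{eq:rankConst}, which is what the greedy OMP procedure does by incrementing the rank, is guaranteed to pass through every solution of \eqref{eq:proxRank}.

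There is no real obstacle here. The only points needing care are bookkeeping: \eqref{eq:proxRank} carries a factor $\tfrac{1}{2}$ on the Frobenius term that Lemma\;\ref{lemma:duality2}'s statement drops, which is harmless since it only rescales $\eta$; and existence of minimizers is not needed, because the statement concerns inclusion of solution sets, and an empty $\cS_\eta^{p}$ contributes nothing to $\cS^{p}$.
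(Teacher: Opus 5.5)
Your argument is correct and follows the paper's proof. Lemma~\ref{lemma:duality} places each penalized optimum $\Z^*$ in the constrained solution set at $r^*=\Gamma(\Z^*-\W^{(1)})$, and Lemma~\ref{lemma:duality2} supplies the paper's second step, that any constrained optimum $\Z^o$ at that rank is itself a penalized optimum; your remark that the literal inclusion needs only Lemma~\ref{lemma:duality} is a fair sharpening rather than a different route.

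Two statements in your aside are slightly too strong, though neither affects the proof. First, $\cS^c_{r^*}\subseteq\cS^p_{\eta}$ need not be an equality when optima of different ranks tie at the same $\eta$; then $\cS^p_\eta$ is the union of $\cS^c_r$ over all attained ranks. Second, sweeping $r$ yields \emph{a} solution for every $\eta$, not every solution.
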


\begin{proof}
By Lemma\;\ref{lemma:duality}, each solution $(\eta^*,\Z^*)$ with $\Gamma(\Z^* - \W^{(1)}) = r^*$, of the set of solutions $\{\Z^*\}$ to \eqref{eq:proxRank} for all possible $\eta > 0$, is an optimal solution to \eqref{eq:rankConst} for $r=r^*$.
By solving \eqref{eq:rankConst} instead for solution $\Z^o$ (which may not be the same as $\Z^*$), $\Z^o$ is also a solution to \eqref{eq:proxRank} at $\eta^*$.
Thus, solving \eqref{eq:rankConst} for all possible $r$'s creates a set of solutions to \eqref{eq:proxRank} for all possible $\eta$'s for $\eta > 0$.
\end{proof}

This corollary is useful in algorithm design, because while the set of possible $\eta > 0$ is infinite, the set of possible $r = \Gamma(\Z - \W^{(1)}) \in \{0, 1, \dots, N\}$ is countably finite.
Specifically, starting at $r=0$, we can generate solutions to \eqref{eq:rankConst} for increasing $r$ and test them against objective in \eqref{eq:proxRank} for a given $\eta$ for an optimal solution. 

For simplicity, we prove that a simple variant of full OMP---where an index $i$ is selected into the support set $\cI$ at each greedy step if $\r_i \in \{\v_i \v_i^\top, \g_i \g_i^\top, \h_i \h_i^\top\}$ induces the largest inner-product magnitude $|\langle \Z - \M, \r_i \rangle|$---computes an optimal solution to \eqref{eq:proxRank} under specified conditions.

\begin{theorem}
An optimal solution $\Z^o$ to \eqref{eq:proxRank} with $\Gamma(\Z^o - \W^{(1)}) = K$ for a given $\eta$, is computable by Full OMP, if $\W^{(1)} - \M$ is $K$-sparse in the span of rank-1 matrix set $\cD \cup \{\v_1 \v_1^\top\}$ for support set  $\cI \cup \{1\}$, \ie, $\W^{(1)} - \M = a_1 \v_1 \v_1^\top + \sum_{i \in \cI} a_i \v_i \v_i^\top + b_i \g_i \g_i^\top + c_i \h_i \h_i^\top$, where $a_1 \neq 0$ and $|\cI| = K-1$, and each set of coefficients $(a_m, b_m, c_m)$ for $m \in \cI$ satisfies one of two inequalities:

\vspace{-0.05in}
\begin{small}
\begin{align}
|a_m| &> \frac{|b_m|}{2} + \frac{|c_m|}{2} + \frac{1}{4} \sum_{i \in \cI}  \left( |b_i| + |c_i| \right)
\label{eq:opt_condition1} \\
\max \left(|b_m|,\, |c_m|\right) &> \frac{|a_m|}{2} - \frac{|b_m| + |c_m|}{4} + \frac{1}{2} \sum_{i \in \cI}  \left( |b_i| + |c_i| \right) .
\label{eq:opt_condition2}
\end{align}
\end{small}
\label{thm:optimality}
\end{theorem}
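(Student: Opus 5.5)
The plan is to reduce the statement to an exact-recovery argument for a greedy pursuit, using Corollary\;\ref{corollary:superset}. By Lemmas\;\ref{lemma:duality} and \ref{lemma:duality2}, an optimal solution of \eqref{eq:proxRank} at the given $\eta$ with $\Gamma(\Z^o-\W^{(1)})=K$ is obtained by solving \eqref{eq:rankConst} with $r=K$. Full OMP produces candidates for $r=0,1,\ldots$ and tests them against objective \eqref{eq:proxRank}. It therefore suffices to show two things. First, after $K$ greedy steps (the initial $\v_1\v_1^\top$ step plus $K-1$ triple selections) the support equals $\{1\}\cup\cI$. Second, the least-squares update \eqref{eq:LS_soln} then drives the residual $\Z-\M$ to zero.

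The second point is easy, given the first. Once the selected atoms span $\W^{(1)}-\M$, the least-squares problem \eqref{eq:LS} attains residual $0$. The resulting $\Z=\M$ has $\Gamma(\Z-\W^{(1)})\le K$, since every selected triple only adds the directions $\v_1,\v_i$. Hence it attains the minimum of \eqref{eq:rankConst} for $r=K$, and by Lemma\;\ref{lemma:duality2} it is optimal for \eqref{eq:proxRank}.

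The core is an induction showing that at each step the greedy rule picks an index in $\cI$ not yet chosen.
\begin{itemize}
\item \textbf{Residual form.} At an intermediate stage the residual $\Z-\M$ is, up to sign, a combination of the atoms indexed by $\{1\}\cup\cI$. This holds because least-squares projection onto a subspace of that span keeps the residual in that span.
\item \textbf{Inner products.} I will compute all inner products explicitly using orthonormality of the $\v_i$:
\begin{align}
\langle \v_i\v_i^\top,\v_j\v_j^\top\rangle &= \delta_{ij}, &
\langle \v_j\v_j^\top,\g_i\g_i^\top\rangle &= \tfrac14(\delta_{1j}+\delta_{ij})^2,
\nonumber\\
\langle \g_i\g_i^\top,\g_j\g_j^\top\rangle &= \tfrac14(1+\delta_{ij})^2, &
\langle \g_i\g_i^\top,\h_j\h_j^\top\rangle &= \tfrac14(1-\delta_{ij})^2,
\nonumber
\end{align}
and similarly for the remaining pairs. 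The key consequence is that a $\g_i\g_i^\top$ or $\h_i\h_i^\top$ atom has overlap $1/4$ with every $\g_j\g_j^\top$ or $\h_j\h_j^\top$ atom, through the shared $\v_1\v_1^\top$ component. It has overlap $1/2$ with $\v_1\v_1^\top$ and $\v_i\v_i^\top$.
\item \textbf{Correlations with correct atoms.} For a correct index $m$, the correlation with $\v_m\v_m^\top$ equals $a_m+(b_m+c_m)/2$ up to sign. Its magnitude is at least $|a_m|-(|b_m|+|c_m|)/2$.
\item \textbf{Correlations with wrong atoms.} For an index $j\notin\{1\}\cup\cI$, the correlation with $\v_j\v_j^\top$ is $0$. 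The correlation with $\g_j\g_j^\top$ or $\h_j\h_j^\top$ picks up only the $\v_1$ parts, giving roughly $\tfrac12 a_1+\tfrac14\sum_{i\in\cI}(|b_i|+|c_i|)$.
\end{itemize}

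Condition \eqref{eq:opt_condition1} is exactly what makes the $\v_m\v_m^\top$ correlation strictly beat the cross-talk term $\tfrac14\sum(|b_i|+|c_i|)$. Condition \eqref{eq:opt_condition2} handles the alternative case, where $\g_m\g_m^\top$ or $\h_m\h_m^\top$ is the winning correct atom; there the self-overlap $1$, minus the $1/2$ leakage into $a_m$ and the $1/4$ overlaps, must beat the wrong atoms' $1/2$-weighted cross-talk.

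The $a_1$ contribution should be removed by the first step and the orthogonalization. After the least-squares solve the residual is orthogonal to $\v_1\v_1^\top$, so the $\tfrac12 a_1$ cross-talk vanishes, and this is why only the $b_i,c_i$ sums appear. I would then verify that this orthogonality to already-selected atoms keeps the margins valid at later steps, since the still-unselected correct coefficients are unchanged by the projection.

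The main obstacle is this last bookkeeping. After partial projection, the residual coefficients on the remaining atoms may shift, because the $\g$ and $\h$ atoms are not mutually orthogonal. I would first try to show that projecting out the selected triples and $\v_1\v_1^\top$ alters only the $\v_1\v_1^\top$ component. If that fails, I would bound the shift by the same $\tfrac14\sum(|b_i|+|c_i|)$ term already budgeted into \eqref{eq:opt_condition1}--\eqref{eq:opt_condition2}.
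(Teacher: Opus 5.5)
Your outline follows the paper's own argument. You preselect $\v_1\v_1^\top$, compare the best correct-triple correlation against the best wrong-triple correlation at each greedy step, note that the least-squares residual is zero once the support $\{1\}\cup\cI$ is found, and finish with Lemma~\ref{lemma:duality2}. The bookkeeping step you defer is the crux, and the paper's appendix gets it wrong. It writes $\s_t=\sum_{i\in\cI_t}(a_i\v_i\v_i^\top+b_i\g_i\g_i^\top+c_i\h_i\h_i^\top)$ and also asserts $\s_t\perp\v_1\v_1^\top$. But that sum has inner product $\tfrac12\sum_{i\in\cI_t}(b_i+c_i)$ with $\v_1\v_1^\top$. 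So you cannot inherit the paper's bounds; you have to finish this step yourself.

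Your first option is the right one, and it is easy. Put $\mathbf{T}_i\triangleq\v_1\v_i^\top+\v_i\v_1^\top$. Then $\{\v_k\v_k^\top\}_k\cup\{\mathbf{T}_i/\sqrt2\}_{i\ge2}$ is orthonormal, and $\g_i\g_i^\top,\h_i\h_i^\top=\tfrac12(\v_1\v_1^\top+\v_i\v_i^\top\pm\mathbf{T}_i)$. The selected atoms therefore span the coordinate subspace $\mathrm{span}\{\v_1\v_1^\top,\v_m\v_m^\top,\mathbf{T}_m\}$ over the selected $m$. Hence, up to sign,
\[
\s_t=\sum_{i\in\cI_t}\Big[\big(a_i+\tfrac{b_i+c_i}{2}\big)\v_i\v_i^\top+\tfrac{b_i-c_i}{2}\,\mathbf{T}_i\Big],
\]
which is your sum with only the $\v_1\v_1^\top$ coefficient shifted, by $-\tfrac12\sum_{i\in\cI_t}(b_i+c_i)$.

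Carried through exactly, this gives more than you expect. The $\tfrac14$ overlaps among $\g/\h$ atoms live entirely in the $\v_1\v_1^\top$ coordinate. So every wrong atom, and every already-selected one, has correlation exactly $0$, not $\tfrac14\sum(|b_i|+|c_i|)$. For $m\in\cI_t$ the three correlations are
\[
a_m+\tfrac{b_m+c_m}{2},\qquad \tfrac{a_m}{2}+\tfrac{3b_m}{4}-\tfrac{c_m}{4},\qquad \tfrac{a_m}{2}-\tfrac{b_m}{4}+\tfrac{3c_m}{4}.
\]
Condition \eqref{eq:opt_condition1} makes the first one nonzero. For condition \eqref{eq:opt_condition2}, use $\sum_{i\in\cI}(|b_i|+|c_i|)\ge|b_m|+|c_m|$ to get $\max(|b_m|,|c_m|)>\tfrac{|a_m|}{2}+\tfrac{|b_m|+|c_m|}{4}$, which makes the second or third nonzero. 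This closes the induction. It also shows the stated conditions are much stronger than needed: it suffices that each index's component is nonzero.

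Two cautions. First, do not use your fallback. Bounding the shift's effect $\tfrac14|\sum(b_i+c_i)|$ and adding it to a $\tfrac14\sum(|b_i|+|c_i|)$ wrong-atom bound gives $\tfrac12\sum(|b_i|+|c_i|)$, which \eqref{eq:opt_condition1} does not beat. Second, do not reuse the paper's lower bound $|b_m|-\tfrac{|a_m|}{2}-\tfrac14\sum_{j\ne m}(|b_j|+|c_j|)$ for the $\g_m$ atom. It fails for the true residual: with $\cI_t=\{m\}$, $a_m=c_m=0$, $b_m=1$, the correlation is $\tfrac34$. Minor point: $\langle\v_j\v_j^\top,\g_i\g_i^\top\rangle=\tfrac12(\delta_{1j}+\delta_{ij})^2$, not $\tfrac14(\delta_{1j}+\delta_{ij})^2$.
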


Essentially, \eqref{eq:opt_condition1} and \eqref{eq:opt_condition2} mean that the largest magnitude in each coefficient set $(a_m, b_m, c_m)$, $m \in \cI$, used to expressed $\W^{(1)} - \M$, must be relatively large, and $K$ must be small.
See Appendix\;\ref{append:optimality} for a proof.

Note that adding all three $\v_i \v_i^\top$, $\g_i \g_i^\top$ and $\h_i \h_i^\top$ to a matrix $\R$ with $\v_1 \v_1^\top$ already in its support increases rank $\Gamma(\R)$ by only one. 
Thus, $\R = \W^{(1)} - \M$ being $K$-sparse with support set $\cI$, $|\cI| = K-1$, means $\R$ is rank-$K$.

The complexity of the Full OMP algorithm is dominated by the computation of the full eigenvector set $\{\v_i\}$ of $\W^{(1)}$ to construct $\V$, which in the worst case is $\cO(N^3)$. 

%Note that in the general case when $\W^{(1)} - \M$ has different eigenvectors corresponding to non-zero eigenvalues than $\cV$, Full OMP algorithm is not globally optimal.Note that the eigenvectors of solution $\Z^* = \W^{(1)} - \R^*$ to \eqref{eq:rankConst} can differ from original $\W^{(1)}$, since by definition eigenvectors are successive orthonormal arguments that minimize the Rayleigh quotient $\min_{\v} \frac{\v^\top \Z^* \v}{\v^\top \v}$ for symmetric and real $\Z^*$.  
%Thus, the eigen-structure of the adjacency matrices slowly evolves over time in the general case.

\vspace{0.05in}
\subsubsection{Fast OMP Algorithm}
\label{subsubsec:fast}

To reduce the computation complexity of Full OMP, we develop a faster variant that does not require full eigen-decomposition of $\W^{(1)}$. 
Instead of dictionary $\cD$ \eqref{eq:setR} containing $3N-3$ rank-1 matrices, we consider a smaller dictionary $\cD'$ that contains $3B-3$ matrices:
\begin{align}
\cD' \triangleq \left\{ ~ \{\v_i \v_i^\top\}_{i=2}^B, ~\{\g_i \g_i^\top\}_{i=2}^B, ~\{\h_i \h_i^\top \}_{i=2}^B ~ \right\} .
\label{eq:setR'}
\end{align}
In words, $\cD'$ is a sub-dictionary of rank-1 matrices that are composed of $B$ eigenvectors of $\W^{(1)}$ corresponding to the $B$ largest eigenvalues.
These $B$ eigenvectors can be computed in linear time, assuming $B \ll N$, via fast numerical linear algebra algorithms, such as \textit{Locally Optimal Block Preconditioned Conjugate Gradient} (LOBPCG) \cite{knyazev01}.

Thus, a simple fast variant is to first compute $B$ eigenvectors $\v_i$ of $\W^{(1)}$ corresponding to the $B$ largest eigenvalues, construct rank-1 matrix dictionary $\cD'$, and follow the flow diagram in Fig.\;\ref{fig:greedy_algo_flowchart}, using $\cD'$ instead of $\cD$.
Assuming $\W^{(1)} - \M$ is sparse---$\W^{(1)}$ specifies a sparse graph while $\M$ is a candidate for sparse $\W^{(2)}$---each inner-product computation has complexity $\cO(N)$.
To ensure $\W^{(1)} - \M$ is sparse, after computing gradient descent $\W^{(2)} - \epsilon \nabla_{\W^{(2)}}$ in Section\;\ref{subsubsection:gradientDescent}, we perform \textit{hard-thresholding} to sparsify $\W^{(2)} - \epsilon \nabla_{\W^{(2)}}$ as input to the proximal mapping. 
Thus, the complexity of the fast greedy algorithm is $\cO(N)$.

\vspace{0.05in}
\subsubsection{Comparison of Full and Fast OMP}
To compare Full OMP and Fast OMP, we generated $10$ synthetic graphs of sizes 
$ N \in \{50, 80, 100, 120, 150, 180, 200, 250, 300, 400\} $.
Each initial adjacency matrix $\W^{(1)}$ was sampled from a Bernoulli model with probability $p=0.1$, then symmetrized with zero diagonal terms. 
We constructed a ground-truth matrix $\W^{(2)}$ by adding a rank $\leq 3$ update to $\W^{(1)}$, ensuring $\W^{(2)} - \W^{(1)}$ is exactly low-rank. 
To simulate a noisy version of the ground truth $\W^{(2)}$, we further added small Gaussian perturbations to obtain $\M$, which serves as a candidate for $\W^{(2)}$. 
We then applied hard-thresholding to $\M$ using $\epsilon = 10^{-3}$ to sparsify it.

\begin{table}[ht!]
\centering
\caption{Runtime (seconds) and Frobenius error \(\|\Z - \W^{(2)}\|_F\)
         for Full vs. Fast Greedy across ten graph sizes.}
\label{tab:fg-fag-time-error}
\begin{tabular}{c|cc|cc}
\hline
\multirow{2}{*}{Size \(N\)} & \multicolumn{2}{c|}{Full} & \multicolumn{2}{c}{Fast} \\
& Time(s) & Error & Time(s) & Error \\
\hline
100& 0.0555& 43.3007& 0.0112& 45.7107\\
250& 0.1762& 85.2888& 0.0223& 88.4196\\
500& 0.6242& 125.7618& 0.0393& 128.4137\\
750& 2.1958& 156.2433& 0.0508& 158.6105\\
1000& 1.9460& 173.9408& 0.0609& 176.0919\\
2500& 20.0321& 282.3022& 0.2410& 283.9948\\
5000& 83.6362& 401.1751& 0.7677& 403.0081\\
\hline
\end{tabular}
\end{table}

% \begin{figure}[ht!]
% \centering
% \includegraphics[width=0.48\textwidth]{Images/fg_fag_comparison.png}
% \caption{Comparison of runtime and estimation error for 
%          Full Greedy vs. Fast Greedy.}
% \label{fig:fg-fag-plot}
% \end{figure}

We applied full eigen-decomposition\footnote{numpy.linalg.eig()} for the full greedy algorithm and LOBPCG\footnote{scipy.sparse.linalg.lobpcg()} for the fast greedy algorithm to compute $B$ extreme eigenvectors where $B = 10$. 
All experiments were performed in Python 3.10.15 (conda-forge) on macOS 26.3.1 with an Apple M5 CPU, using NumPy 1.26.4 and SciPy 1.15.3.
We recorded the runtime (in seconds) and the Frobenius-norm error to measure how accurately each algorithm estimated the ground-truth matrix. 

Table\;\ref{tab:fg-fag-time-error} presents these results. %and Figure\;\ref{fig:fg-fag-plot} visualizes the same data.  
As shown, Full OMP achieves slightly lower reconstruction errors but requires significantly more time when $N$  becomes larger. 
Fast OMP, in contrast, completes each run in under a second for all sizes, offering dramatic speed-ups at the cost of a modest increase in error.

%\vspace{0.05in}
\subsection{Justification of Assumption}
\label{subsec:justify}

We justify the assumption that update matrix $\P = \W^{(2)} - \W^{(1)}$ is a sparse linear combination of rank-1 matrices in dictionary $\cD$ that are outer-products of $\W^{(1)}$'s eigenvectors $\v_i$. 
Specifically, outer-product $\v_1 \v_1^\top$ of eigenvector $\v_1$ corresponding to the largest eigenvalue $\lambda_1$ provides the best rank-1 approximation of $\P$.
$\v_1$ of a non-negative irreducible matrix $\W^{(1)}$ is called the \textit{Perron vector}. 
By the Perron-Frobenius Theorem \cite{johnson12}, it is a strictly positive vector, whose entry magnitudes $|v_{1,k}|$ denote the relative importance of node $k$.
The importance measure here is \textit{eigenvector centrality} \cite{bonacich87}, meaning that node $k$ is important iff its neighbors $l$'s are also important; \ie, node $k$'s importance value (centrality) $s_{k}$ is a weighted linear combination of its neighbors' centralities $s_{l}$'s:
\begin{align}
s_{k} &= \frac{1}{\lambda} \sum_{(k,l) \in \cE} w_{k,l} s_{l}, 
~~~~~
\lambda \s = \W^{(1)} \s .
\end{align}
This implies $\s$ is an eigenvector of $\W^{(1)}$ with eigenvalue $\lambda$. 
Among $\W^{(1)}$'s $N$ eigenvectors, $\v_1$ is the only eigenvector with strictly positive entries.
Hence, $\v_1$ is interpreted as an importance vector, and the outer product $\v_1 \v_1^\top$ emphasizes interaction between important nodes.

Next, we provide a more mathematically rigorous analysis.
%Denote by $\P = \tilde{\W} - \W$ the update matrix.
Given that the eigenvectors $\v_i$'s of $\W^{(1)}$ are orthonormal and span the signal space, we can write update matrix $\P$ as
\begin{align}
\P &= \I \, \P \, \I = \sum_i \v_i \v_i^\top \P \sum_j \v_j \v_j^\top
\nonumber \\
&= \sum_i \sum_j \v_i \underbrace{(\v_i^\top \P \v_j)}_{q_{i,j}} \v_j^\top 
\end{align}
where $\I = \sum_i \v_i \v_i^\top$.
Suppose now that $\P = \alpha \u \u^\top$ is \textit{rank-1}, where vector $\u$ is unit-norm.
Then,
\begin{align}
\v_i^\top \P \v_j = \alpha (\v_i^\top \u) (\u^\top \v_j) .
\end{align}
Thus, the coefficients $q_{i,j}$'s of terms in the double sum are significant only if the projections of update vector $\u$ onto $\v_j$ and $\v_i$ are \textit{both} large. 
Suppose further that $\u$ is a random unit-norm vector. 
By the \textit{Concentration of Measure Phenomenon} (CoMP) \cite{Ledoux2001}, the projection of $\u$ onto $\v_i$ is upper-bounded:
\begin{align}
\text{Pr}(|\v_i^\top \u| > \epsilon) \leq C e^{-c \epsilon^2 N} 
\end{align}
where $C, c$ are constants.
Thus, $\u$ and $\v_i$ are already near-orthogonal as dimension $N$ increases, and the probability of projections of $\u$ onto $\v_i$ and $\v_j$ being both large is much smaller than just projection on $\v_i$.
Thus, the first-order approximation of $\P$ is to consider just the single sum:
\begin{align}
\P \approx \sum_i q_{i,i} \, \v_i \v_i^\top .
\end{align}

Suppose further that update vector $\u$ is only positive in a local graph neighborhood, while the rest of $\u$'s entries are zeros. 
This is the case for a $K$nn spatial graph, where each node is connected only to its $K$-nearest spatial neighbors (who are themselves similar) at all time, and a node $i$ changes its behaviour and becomes more/less similar to its $K$ connected neighbors at $t=2$. 
Then, $\u^\top \v_1$ is large, since $\v_1$ is strictly positive (by Perron-Frobenius Theorem) and smooth.
That means coefficients $\v_1^\top \P \v_i$'s and $\v_i^\top \P \v_1$'s are the next largest.
Thus, the second-order approximation of $\P$ is
\begin{align}
\P &\approx \sum_{i=1}^N q_{i,i} \v_i \v_i^\top + \sum_{i \neq 1} q_{1,i} \v_1 \v_i^\top + q_{i,1} \v_i \v_1^\top
\nonumber \\
&\stackrel{(a)}{=} \sum_{i=1}^N q_{i,i} \v_i \v_i^\top + \sum_{i \neq 1} q_{1,i} \left( \v_1 \v_i^\top + \v_i \v_1^\top \right)
\end{align}
where $(a)$ follows since $q_{1,i} = q_{i,1}$ for symmetric $\P$.  
$\P$ being localized also means that the $0$-order approximation of $\P$ is just the first component, \ie, $q_{1,1} \v_1 \v_1^\top$.

Since $\v_1 \v_i^\top + \v_i \v_1^\top$ is a symmetric rank-2 matrix, we can write it instead using $\g_i$ and $\h_i$ in \eqref{eq:setR2}:
\begin{align}
\g_i \g_i^\top - \h_i \h_i^\top = \v_1 \v_i^\top + \v_i \v_1^\top .
\end{align}
We can conclude that $\P$ can be approximated as a sparse linear combination of $\v_i \v_i^\top$'s, $\g_i \g_i^\top$'s and $\h_i \h_i^\top$'s, which is $\cD$.

%In practice, $\{b_i\}$ are first computed by singular value decomposition (SVD) on $\W^{(1)}$, then $\{a_i\}$ are computed by projecting $\M$ onto rank-1 matrices $\u_i \v_i^\top$ that are outer product of $\W^{(1)}$'s computed singular vectors. The solution to \eqref{eq:proxNuclear} is $\Z^* = \U \S \V^\top$, where $\S$ is diagonal with $S_{i,i} = s_i^*$. 

\subsection{Projection Operator}

To address the third convex but non-smooth term in \eqref{eq:uOpt}, one can easily show that the proximal mapping for the indicator function $\text{I}_W(\W^{(2)})$ is a convex-set projection operator \cite{parikh13}:

\vspace{-0.05in}
\begin{small}
\begin{align}
\text{prox}_{I_W}(\M) &= \text{proj}_{\cC}(\M)  
\\
(\text{proj}_{\cC}(\M))_{i,j} &= 
\left\{ \begin{array}{ll}
0 & \mbox{if}~ i=j 
\label{eq:csProj} \\
0 & \mbox{if}~ i \neq j, M_{i,j} < 0 \\
\frac{M_{j,i} + M_{i,j}}{2} & \mbox{if}~ i \neq j, M_{j,i}, M_{i,j} \geq 0
\end{array}
\right. .
\end{align}
\end{small}

\vspace{-0.1in}
\subsection{Algorithm Summary}

We summarize our PGD algorithm as follows: 
\begin{align}
\W^{(2)} \leftarrow \text{proj}_{\cC} (\text{prox}_{g,\W^{(1)}}(\W^{(2)} - \epsilon \nabla_{\W})) .
\label{eq:PGD}
\end{align}
In words, we first decrease the first term in \eqref{eq:uOpt} by updating $\W^{(2)}$ in the negative gradient direction $-\nabla_{\W}$, where $\epsilon > 0$ is a step size. 
We then perform proximal mapping \eqref{eq:proxRank}.
Finally, we perform convex-set projection \eqref{eq:csProj}.
\eqref{eq:PGD} is executed iteratively until solution $\W^{(2)}$ converges.
Signal $\x_2$ and adjacency matrix $\W^{(2)}$ are alternately optimized until the pair converges.
See Appendix\;\ref{append:convergence} for a convergence proof given a carefully chosen $\epsilon$.

% \vspace{0.05in}
% \begin{enumerate}
% \item Perform gradient descent $\M^1 \leftarrow \W^{(2)} - \epsilon \nabla_{\W}$. 
% \item Perform proximal mapping $\M^2 \leftarrow \text{prox}_{g,\W^{(1)}}(\M^1)$.
% \item Perform projection $\W^{(t)} \leftarrow \text{proj}_{\cC}(\M^2)$.
% \end{enumerate}
% \vspace{0.05in}

To interpolate $\y_t \in \mathbb{R}^{M_t}$ to $\x_t \in \mathbb{R}^{N}$ for each $t \in \{2, \ldots, T\}$, we compute each $\x_t$ iteratively, \ie, compute first $\x_2$ and $\W^{(2)}$ using previous $\W^{(1)}$, then $\x_3$ and $\W^{(3)}$ using previous $\W^{(2)}$, etc.

\section{Algorithm Unrolling}
\label{sec:unrolling}

%\begin{figure*}[t]
\begin{figure}[t]
    \centering
    \includegraphics[width=0.49\textwidth]{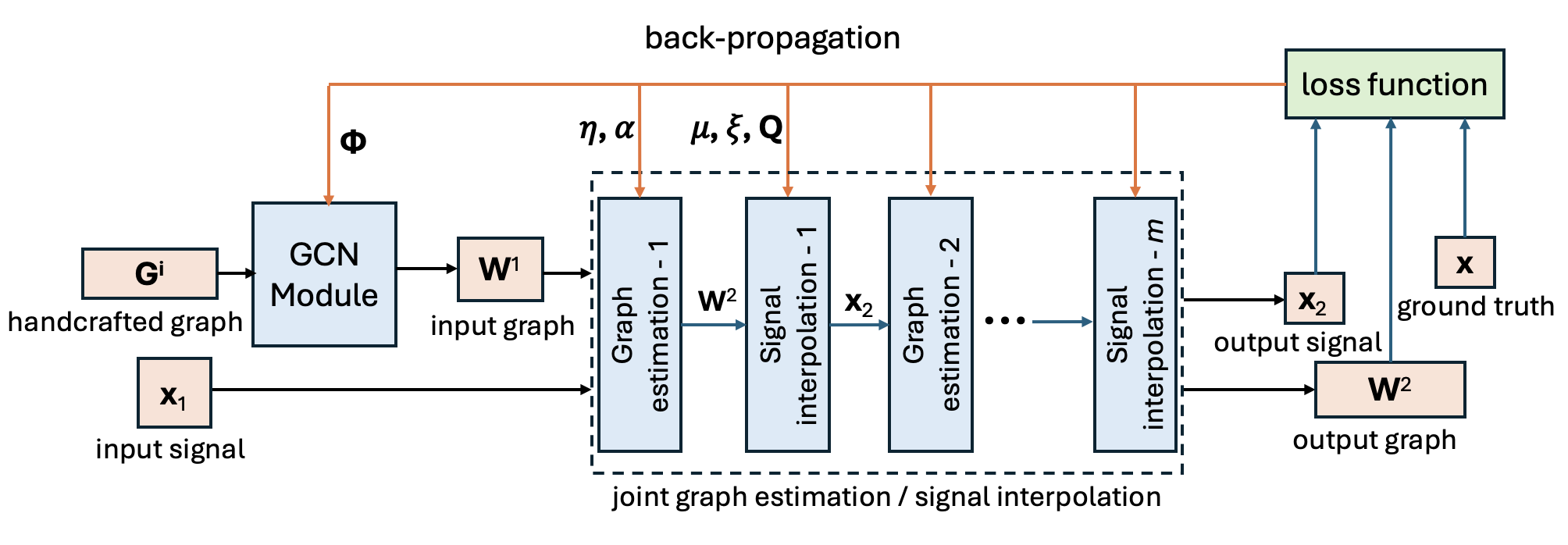}
    \vspace{-0.25in}
    \caption{Neural Network Architecture with GCN and Unrolled Layer.}
    \label{fig:neural_network_architecture}
\end{figure}    
%\end{figure*}

%\vspace{-0.05in}
\subsection{Unrolling of Optimization Algorithm}
\label{subsec:unrolling}

Given the alternating algorithm to compute $\x_2$ and $\W^{(2)}$ developed in the previous section, we unroll its iterations into layers to construct a lightweight feed-forward network amenable to end-to-end data-driven parameter tuning.
The tunable parameters in our algorithm are: 
\begin{enumerate}
\item prior weights $\mu$, $\xi$ and $\eta$ in objective \eqref{eq:jointForm}; 
\item two CG parameters $\kappa$ and $\beta$ corresponding to conjugate gradient descent step size and momentum used to solve linear system \eqref{eq:linSys1} for $\x_2^*$ given $\W^{(2)}$; 
\item the step size $\epsilon$ for gradient descent in the PGD algorithm \eqref{eq:PGD} for $\W^{(2)}$; and
\item the diagonal entries $Q_{i,i}$ in matrix $\Q$ of the fidelity term. 
\end{enumerate}
Note that these parameters are tuned per unrolled layer. 

Based on the alternating optimization algorithm described in Section~\ref{Alternating Optimization Algorithm}, our unrolled module consists of two types of layers. 
The first type optimizes the adjacency matrix $\W^{(2)}$ assuming the signal $\x_2$ is fixed, while the second type optimizes $\x_2$ assuming $\W^{(2)}$ is fixed.
See Fig.\;\ref{fig:neural_network_architecture} for an illustration.

\subsection{Learning an Initial $\W^{(1)}$ via GCN}
\label{subsec:initialGraph}

To construct an initial adjacency matrix $\W^{(1)}$, we first use handcrafted features $\f_i$ per node (\eg, location and land value) to build an initial graph $\cG^i$ based on feature distance, \ie, $w_{i,j} = \exp (- \|\f_i - \f_j\|^2_2)$.
We then feed $\cG^i$ and the node feature matrix $\X$ (which includes $\f_i$s) into a shallow \textit{Graph Convolutional Network} (GCN) \cite{kipf17,Wu21}. 

Figure\;\ref{fig:shallow_gcn-plot} shows the architecture of the GCN. 
A GCNConv layer aggregates feature information from its neighbors. After feature aggregation, a non-linear transformation (ReLU) is applied to the resulting outputs. By stacking multiple layers, the final representation of each node receives relevant features from a larger neighborhood, and the output graph specified by $\W^{(1)}$ better captures the data's structure, providing more meaningful features.
Subsequently, we use $\W^{(1)}$ as the initial graph for our unrolled alternating algorithm. 
Note that we use a very shallow GCN, so the number of parameters is small, which makes it suitable for data-starved scenarios.

\begin{figure}[ht!]
\centering
\includegraphics[width=0.4\textwidth]{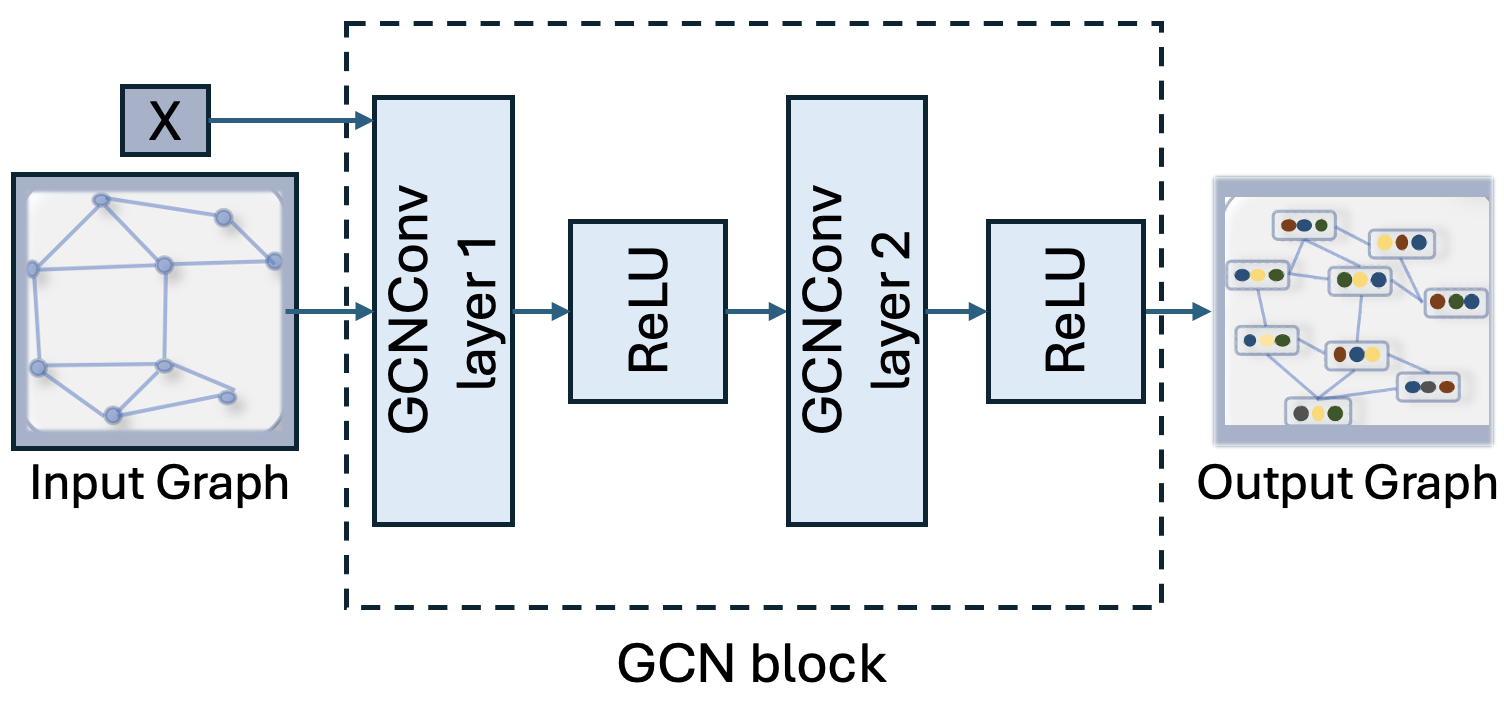}
\caption{Shallow GCN Architecture}
\label{fig:shallow_gcn-plot}
\end{figure}

\section{Experiments}
\label{sec:results}

\vspace{-0.1in}
\subsection{Datasets}
\subsubsection{Quarterly Farmland Sales Dataset}

The first dataset, obtained from the USDA’s National Agricultural Statistics Service (NASS)\footnote{\url{https://www.nass.usda.gov/}}, contains farmland sales data from various counties in Iowa, US, spanning the fiscal quarters from 2019 to 2023. 
It includes attributes such as gross price, gross acres, tillable acres, sale date, and soil rating. 
Each data point represents a farmland sale in dollars per acre. 
Iowa was chosen due to its extensive availability of county-level data in every quarter. 
%allowing for validation of the low-rank update assumption across consecutive quarters. 
Given that soil rating significantly affects land value, an adjusted price metric is employed, derived through linear regression between the actual price and the soil rating. 

% To illustrate the linear relationship more clearly, Figure \ref{fig:sales_data_regression} displays a scatter plot of farmland average price per acre versus soil rating, alongside the best-fit line obtained from the linear regression used to create the adjusted price metric.

\subsubsection{Daily Temperature Dataset} 

The second dataset, obtained from the Meteostat Python library\footnote{\url{https://dev.meteostat.net/}}, contains daily maximum temperature observations from $107$ weather stations in Ontario, Canada. The dataset includes attributes such as station ID, station name, geographic coordinates (latitude and longitude), and daily maximum temperature readings for each station. 
%To capture temperature trends for this experiment, 
The dataset spans from June 1, 2020 to July 31, 2020. 
Each data point represents the daily maximum temperature of a station, sourced from a mixture of public and governmental weather data interfaces aggregated by Meteostat.

\subsubsection{Traffic Prediction Dataset} 

The third dataset, obtained from the PyTorch Geometric Temporal Datasets\footnote{\href{https://pytorch-geometric-temporal.readthedocs.io/en/latest/modules/dataset.html\#torch_geometric_temporal.dataset.metr_la.METRLADatasetLoader}{METRLADatasetLoader}}, contains traffic readings from 207 detectors on highways in Los Angeles County. 
The dataset contains observations every $15$ minutes between March 2012 to June 2012.
Each data point captures the speed of traffic during the day at each sensor.

\vspace{-0.1in}
\subsection{Experimental Setup}

We compare our proposed joint graph learning / signal interpolation framework with three representative baselines.
The first two are exclusively time‑varying graph learning methods; after estimating the adjacency at all time instants, we recover the missing signal values by solving the linear system in \eqref{eq:linSys}. 
As done in our conference version \cite{bagheri24}, 
\texttt{Tikhonov} and \texttt{sparse} are chosen as graph learning schemes, while \texttt{SVD} and \texttt{OTS} are omitted due to their slow execution speed.

The third baseline \texttt{NestedDAU} is a deep learning-based signal interpolation method that operates on a fixed graph topology, which remains unchanged over time  \cite{nakahama22}. 
We constructed a feature-distance-based adjacency matrix using the nodes available at the initial time step and applied \texttt{NestedDAU} for interpolation.

We summarize the methods as follows:
\begin{enumerate}
\item \texttt{Tikhonov} \cite{kalofolias17} constrains changes in a graph across time by penalizing the Frobenius norm of the difference between consecutive adjacency matrices.
%, ensuring that variations in the graph’s structure remain minimal. 
%Such a smoothness prior is particularly effective for applications where the graph topology changes gradually, as is often the case in networks such as biological interaction graphs, where connections tend to shift incrementally rather than abruptly. By prioritizing stability, \texttt{Tikhonov} provides a robust framework for modeling time-series data on graphs with consistent structural patterns, though it may struggle to capture rapid or significant topological shifts.

\item Assuming  that most connections remain static, \texttt{sparse} \cite{yamada19} 
%This method extends the \texttt{Tikhonov} approach by incorporating a sparsity constraint to better model graphs where only a small number of edges change between time steps. Rather than assuming uniform smoothness across all edges, 
applies a sparsity prior to the difference between consecutive adjacency matrices. 
%which promotes parsimonious updates and enhances computational efficiency. The resulting variant, known as \texttt{TVGL-sparse}, is particularly well-suited for scenarios such as sensor networks, where connectivity changes are driven by occasional events.

\item \texttt{NestedDAU} \cite{nakahama22} adopts the Plug-and-Play ADMM (PnP-ADMM) framework by unrolling it into a deep neural network tailored for GSP. 
Each unrolled layer splits the computation into two stages: an inverse step, which reconstructs partially observed signal nodes using available data, and a denoising step, which imposes a graph-structured prior to regularize the solution. 
%The denoising component is itself an unrolled ADMM procedure designed specifically for graph signals, resulting in a nested architecture that captures intricate signal dependencies while maintaining computational efficiency. 
%Through end-to-end training of layer-wise parameters, \texttt{NestedDAU} strikes a balance between fidelity to the observed data and regularization, making it particularly effective in settings with limited or noisy training data, such as dynamic graph-based sensor systems or incomplete network datasets. However, in contrast to the time-varying graph model introduced in this work, 
\texttt{NestedDAU} relies on a single, fixed adjacency matrix defined at the initial time step.
\end{enumerate}

For the farmland sales dataset (2021-Q2 to 2023-Q2), data from 22 Iowa counties are selected across eight consecutive quarters. The signals are min-max normalized, and $11$ out of $22$ observations per quarter are randomly omitted (50\% missing rate). Gaussian noise with mean $0$ and standard deviation $0.9$ is then added, with the goal of recovering the complete signals for all these quarters.

For the daily temperature dataset (June 2–21, 2020), daily maximum temperature readings from $107$ weather stations in Ontario are used over $20$ consecutive days. Each day’s measurements are normalized, $40$ out of $107$ observations are randomly removed ($\sim35$\% missing rate), and the same Gaussian noise is introduced. The task is to reconstruct the full daily temperature signals.

Finally, the 5-min traffic dataset originally contains measurements from March 2012 to June 2012 for $207$ sensors; Monday, May 28, 2012 was selected as a representative day with the readings between 5am and 9pm used here. Each observation’s signals are normalized, $72$ out of $207$ observations are omitted ($\sim35$\% missing rate), and Gaussian noise is added. The aim is to estimate the complete speed output signals over this 16-hour window.

For each of the datasets, the conference version \cite{bagheri24} uses parameters \((\mu = 0.1, \alpha = 0.1, \eta = 0.1, \gamma = 0.001)\). Both \texttt{Tikhonov} and \texttt{TVGL-sparse}, use \((\alpha = 0.1, \gamma = 0.001)\). For \texttt{NestDAU}, the number of unrolled layers $P$ is set to 5, and the number of channels is $1$. 
In our proposal (journal), initial parameters are set to ($\mu = 0.1$, $\xi = 10^{-5}$, and $\eta = 0.25$) in objective \eqref{eq:jointForm}. The step size $\epsilon$ for gradient descent in the PGD algorithm \eqref{eq:PGD} is 0.1,  and the diagonal entries $Q_{i,i}$ in matrix $\Q$ are initialized based on the number of observations. 
The same initial settings are applied to all three datasets for both our proposal and \texttt{NestDAU}, since they both learn parameters via algorithm unrolling for each dataset. 
%thereby further improving interpolation performance.

% The initial learning rate is set to $0.01$, and the training proceeds for $400$ epochs.

\vspace{-0.1in}
\subsection{Experimental Results}
\label{sec:results_timevarying2}

Fig.\, \ref{fig:rmse_comparison_landsales} shows the performance of five competing methods on the farmland sales dataset across eight quarters. Our method achieves the lowest error metrics in nearly every quarter, demonstrating its consistent improvements over the others.

\begin{figure}[ht!]
\vspace{-0.1in}
\centering
\includegraphics[width=0.5\textwidth]{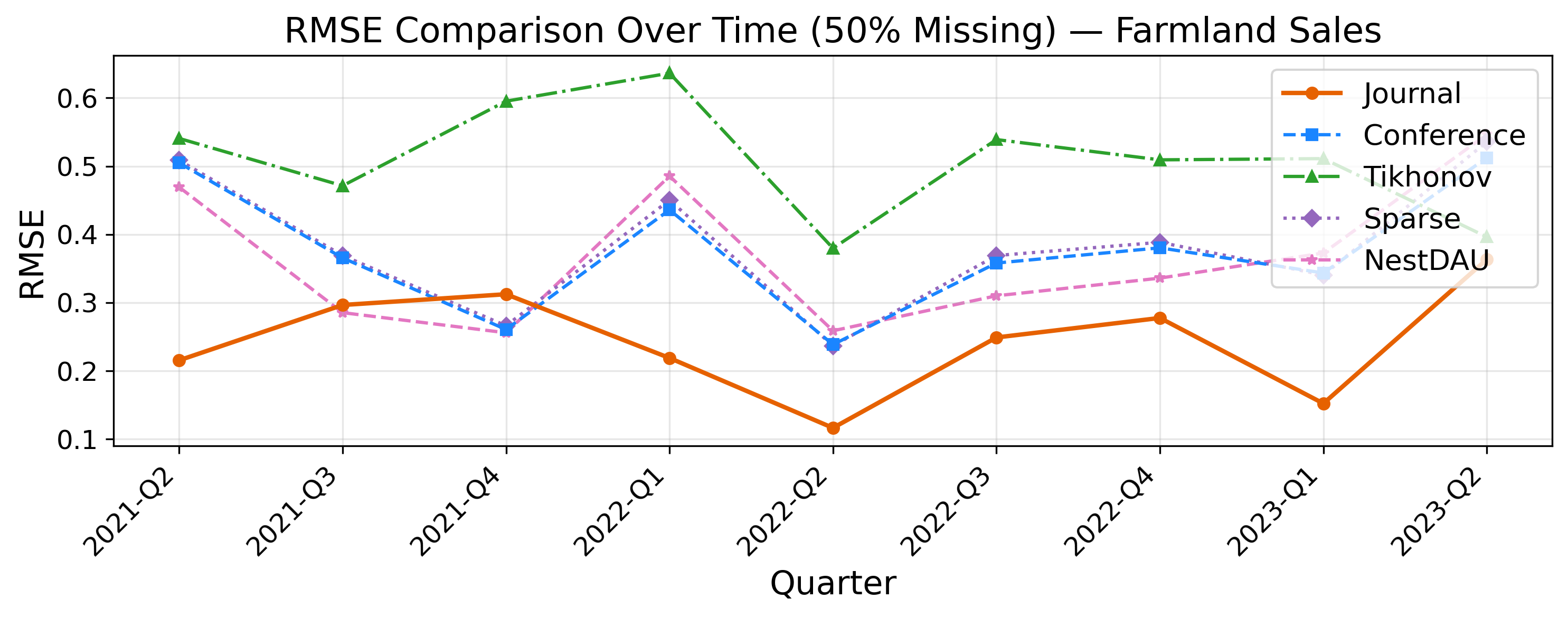}
\vspace{-0.2in}
\caption{RMSE Comparison of Five Interpolation Methods by Quarter (50\% Missing Rate) -  Farmland Sales Dataset}
\label{fig:rmse_comparison_landsales}
\end{figure}

Fig.\,\ref{fig:rmse_comparison_weather} illustrates the performance of five competing methods on the daily temperature dataset from 2020-06-02 to 2020-06-21. 
The proposed method generally achieves the lowest errors overall and on most days. 
\texttt{Tikhonov} tracks it closely from 2020-06-07 until the end after starting with much higher error.

\begin{figure}[ht!]
\vspace{-0.1in}
\centering
\includegraphics[width=0.5\textwidth]{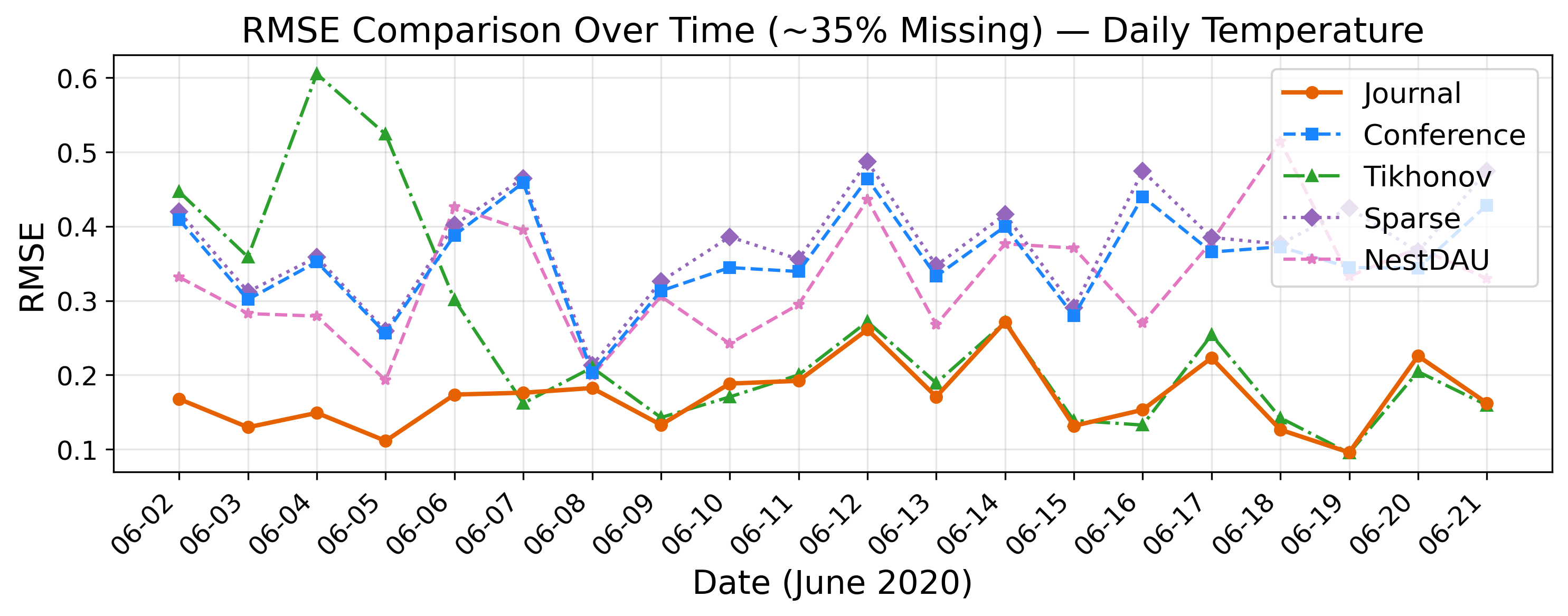}
\vspace{-0.2in}
\caption{RMSE Comparison of Five Interpolation Methods by Quarter ($\sim35$\% Missing Rate) -Temperature Data}
\label{fig:rmse_comparison_weather}
\end{figure}

Fig.\, \ref{fig:rmse_comparison_traffic} illustrates the performance of the five competing methods on the 5-min traffic dataset. The proposed method generally achieves the lowest error metrics across most time intervals, while the conference version \cite{bagheri24} and \texttt{Tikhonov} produce similar per-step RMSE.

Collectively, the results across the three datasets demonstrate that the journal method consistently delivers superior performance compared to competing schemes. 
\texttt{Tikhonov} performs well for the weather and traffic datasets after higher error during the initial time instants. 
This coincides with simple rank-1 updates by the journal method, indicating that minimal graph changes are needed in later time instants. 
%In this scenario, the journal method and \texttt{Tikhonov} are effectively solving GLR interpolation. 
%When the journal method was compared by dropping gi and hi terms, the method proposed performed better for the traffic dataset while weather dataset required twice as many atoms to achieve equivalent performance.

\begin{figure}[ht!]
\vspace{-0.1in}
\centering
\includegraphics[width=0.5\textwidth]{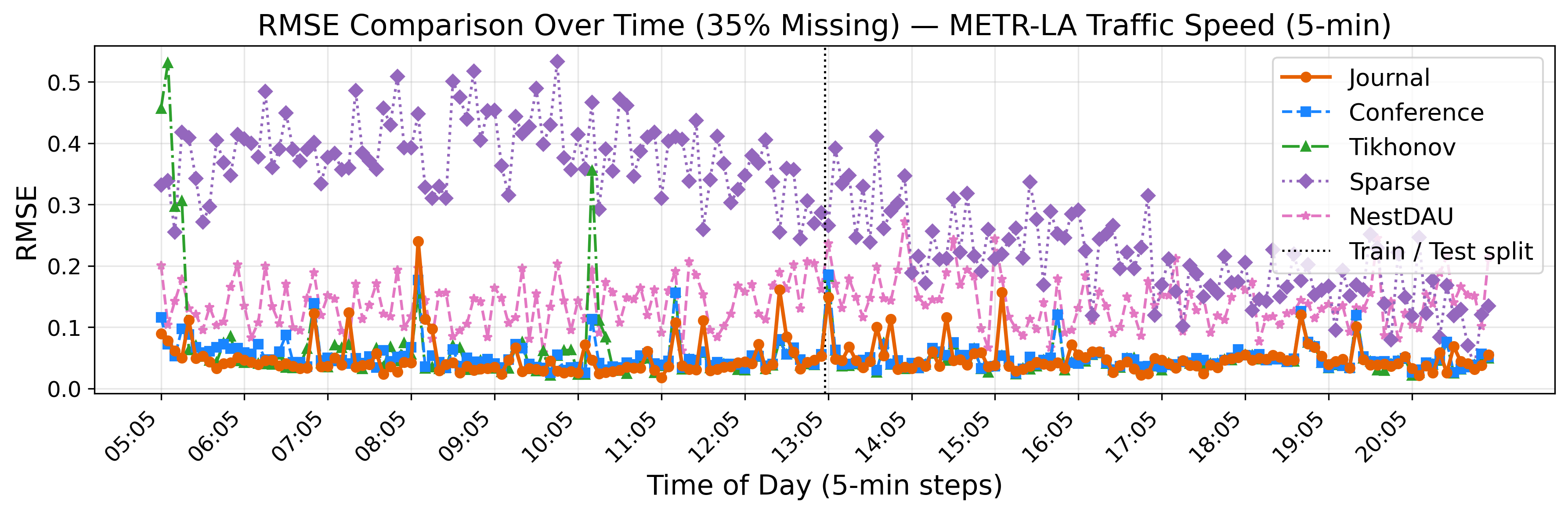}
\vspace{-0.2in}
\caption{RMSE Comparison of Five Interpolation Methods by 5-min Intervals (35\% Missing Rate) - Traffic Data}
\label{fig:rmse_comparison_traffic}
\end{figure}

Finally, Table \ref{tab:rmse_results_all} summarizes the average RMSE interpolation error for all methods across all datasets. 
The averages were computed over 10 quarters for the farmland sales dataset, 20 days for the temperature dataset, and 16 hours for the traffic dataset. 
Overall, the journal method exhibits the lowest average error, highlighting its effectiveness in adapting to diverse data characteristics and temporal scales while maintaining high interpolation accuracy.

For the traffic dataset, we observe that compared to the case when dictionary $\cD$ is limited to only outer-products $\v_i \v_i^\top$'s, full $\cD$ containing $\g_i \g_i^\top$'s and $\h_i \h_i^\top$'s also improves RMSE by $2.2\%$. 
This demonstrates the usefulness of the full dictionary when the time-varying graph is sufficiently large and requires more complex combinations of rank-1 matrices over time beyond simple $\v_i \v_i^\top$'s, as discussed in Section\;\ref{subsec:justify}.

\begin{table}[h!]
\scriptsize
\centering
\caption{Comparison of Average RMSE Errors for Interpolation Methods across Datasets (Lowest Errors in Bold).}
\begin{tabular}{lccccc}
\toprule
Dataset   & Conference & Journal & Tikhonov & Sparse & NestDAU \\
\midrule
Land Sales  & 0.3777& \textbf{0.2446}& 0.5088& 0.3849& 0.3689\\
Weather   & 0.3569& \textbf{0.1713}& 0.2490& 0.3770& 0.3299\\
Traffic& 0.0515& \textbf{0.0489}& 0.0575& 0.2948& 0.1405\\
\bottomrule
\end{tabular}
\label{tab:rmse_results_all}
\end{table}

\vspace{-0.1in}
\subsection{Ablation Study}

To evaluate the impact of key hyperparameters, we conducted ablation studies on the farmland sales dataset, reporting the average RMSE over eight quarters. Unless otherwise specified, the default configuration uses a learning rate of $0.01$, $400$ training epochs, $4$ unrolled layers in the interpolation network, and $8$ hidden layers in the GCN module.

We first evaluated the benefit of updating the graph at each time step (dynamic) versus using a single static graph learned from 2021-Q2 only. The static graph was trained on all available observations from quarter 1, and then used unchanged for interpolation across all eight quarters. In contrast, the dynamic approach updates the adjacency matrix at each quarter via our unrolled algorithm. The static-graph setting yielded an average RMSE of 0.2857, whereas the dynamic-graph approach reduced the RMSE to 0.2446, confirming the value of modeling time‑varying connectivity.

Next, we fixed all other parameters and varied the learning rate. Table \ref{tab:lr_rmse} reports the resulting RMSE after 400 epochs. A learning rate of 0.01 achieved the lowest error, while both smaller and larger rates led to slower convergence or slight over‑/under‑fitting.

%\vspace{-0.1in}
\begin{table}[ht]
  \centering
  \caption{Mean RMSE for Different Learning Rates}
  \label{tab:lr_rmse}
  \begin{tabular}{lc}
    \toprule
    Learning Rate (lr) & Mean RMSE \\
    \midrule
    0.0001  & 0.2502\\
    0.01    & \textbf{0.2446}\\
    0.1     & 0.2738\\
    1& 0.2938\\

    \bottomrule
  \end{tabular}
\end{table}

We then varied the number of unrolled layers \(l\) while holding other settings constant. Table~\ref{tab:unrolled_layers} shows both the RMSE and the average wall‑clock time per run. The results show \(l=4\) with the lowest RMSE (0.2446). Increasing the number of layers above that increases RMSE.

%\vspace{-0.05in}
\begin{table}[ht]
  \centering
  \caption{Performance for Different Numbers of Unrolled Layers (400 epochs)}
  \label{tab:unrolled_layers}
  \begin{tabular}{lcc}
    \toprule
    Unrolled Layers & Time (s) & RMSE \\
    \midrule
    2   &  2.76& 0.2486\\
    4   &  4.63& \textbf{0.2446}\\
    8&  8.31& 0.2753\\
    16& 15.74& 0.2781\\
    \bottomrule
  \end{tabular}
\end{table}

We examined the depth of the GCN feature extractor by varying the number of hidden layers \(H\). Table~\ref{tab:gcn_hidden_layers} reports the RMSE results using the daily temperature dataset. The test was run on the farmland sales dataset but resulted in equal RMSE due to lack of diversity in the dataset for the GCN.
Shallower networks (\(H=4\)) underfit, and deeper networks (\(H\ge16\)) again show increased error, suggesting overfit. 
\(H=8\) strikes the best balance between model capacity and generalization.

\begin{table}[ht]
  \centering
  \caption{RMSE for Different GCN Hidden Layers}
  \label{tab:gcn_hidden_layers}
  \begin{tabular}{lc}
    \toprule
    GCN Hidden Layers & RMSE \\
    \midrule
    4   & 0.1727\\
    8   & \textbf{0.1713}\\
    16  & 0.1717\\
    32  & 0.1720\\
    \bottomrule
  \end{tabular}
\end{table}

Due to the nature of the rank weight loss being discrete since rank is an integer quantity, learning the parameter results in unstable gradients in the unrolled algorithm. 
We tested how variations in $\eta$ affected RMSE to determine an appropriate initial parameter. Additionally, the mean rank per layer was computed to show how many atoms were used in the update. 
The expectation is that as $\eta$ gets smaller, the rank constraint will be weaker and thus a high rank update. This is shown in the tests with a rank-1 update occuring when $\eta = 0.25$.

\vspace{-0.05in}
\begin{table}[ht]
  \centering
  \caption{RMSE for Different $\eta$  (400 epochs)}
  \label{tab:rank_weight}
  \begin{tabular}{lcc}
    \toprule
    $\eta$& RMSE& Mean Rank/Layer\\
    \midrule
    0.01&  0.2600& 9.56\\
    0.05&  0.2606& 5.25\\
    0.1&  0.2697& 6.86\\
    0.25& 0.2446& 1.0\\
    0.5& \textbf{0.2444}& 1.0\\
    1.0& 0.2449& 1.0\\
    \bottomrule
  \end{tabular}
\end{table}

\section{Conclusion}
\label{sec:conclude}
To model slowly time-varying graphs for scenarios where pairwise similarities between nodes change over time, we propose a low-rank model---the difference between two consecutive adjacency matrices $\W^{(2)} - \W^{(1)}$ is low-rank. 
We formulate a joint signal interpolation / graph learning problem, where signal $\x_2$ and adjacency matrix $\W^{(2)}$ at time $t=2$ are alternately optimized, given $\x_1$ and $\W^{(1)}$ at $t=1$.
When $\x_2$ is fixed, we propose a proximal gradient descent (PGD) algorithm to compute $\W^{(2)}$, where the proximal mapping for the rank term $\Gamma(\W^{(2)} - \W^{(1)})$ can be efficiently computed using an orthogonal matching pursuit (OMP) algorithm considering only outer-products of $\W^{(1)}$'s extreme eigenvectors. 
%assuming that the eigenvector set of $\W^{(1)}$ contains the eigenvectors corresponding to non-zero eigenvalues of $\W^{(2)} - \W^{(1)}$. 
We unroll our algorithm into layers to compose a lightweight network for data-driven parameter learning.
Experiments show that our joint optimization achieves better signal interpolation results than existing time-varying graph models.

\vspace{-0.1in}
\appendices

\section{Least-Square Solution}
\label{append:LS_soln}

We prove that the least-square problem \eqref{eq:LS} has linear system \eqref{eq:LS_soln} as solution.
We first rewrite the Frobenius norm in \eqref{eq:LS} in vector form, using definitions in \eqref{eq:LS_defn}:
\begin{align}
\left\|
\V^\top \a - \m
\right\|^2_2 &= (\V^\top \a - \m)^\top (\V^\top \a - \m) 
\nonumber \\
&= \a^\top \V \V^\top \a - 2 \a^\top \V \m + \m^\top \m .
\end{align}
Taking the derivative w.r.t. $\a$ and setting it to $0$:
\begin{align}
2 \V \V^\top \a - 2 \V \m = 0
\end{align}
We rearrange terms to get \eqref{eq:LS_soln}.

\section{Proof of Optimality}
\label{append:optimality}

We prove Theorem\;\ref{thm:optimality} here.
The proof follows similar arguments as the known \textit{mutual incoherence condition} (MIC) from sparse coding \cite{Tropp2004}, where \textit{mutual incoherence} is defined as
\begin{align}
\rho(\cD) = \max_{\r_i, \r_j \in \cD, i \neq j} \frac{ \langle \r_i, \r_j \rangle}{\|\r_i\| \|\r_j\|} .
\end{align}

\begin{proof}

Note first that the inner-products between rank-1 components in $\cD \cup \{ \v_1 \v_1^\top\}$ can be easily upper-bounded:
\begin{align}
\langle \v_i \v_i^\top,  \v_j \v_j^\top \rangle &= \left\{ \begin{array}{ll}
1 & \mbox{if}~ i = j \\
0 & \mbox{o.w.}
\end{array} \right.
\nonumber \\
\langle \g_i \g_i^\top,  \g_j \g_j^\top \rangle = \langle \h_i \h_i^\top,  \h_j \h_j^\top \rangle &= \left\{ \begin{array}{ll}
1 & \mbox{if}~~ i = j \\
\frac{1}{4} & \mbox{o.w.}
\end{array} \right.
\nonumber \\
\langle \v_i \v_i^\top, \g_j \g_j^\top \rangle = \langle \v_i \v_i^\top, \h_j \h_j^\top \rangle &= \left\{ \begin{array}{ll}
\frac{1}{2} & \mbox{if}~i=1 ~\mbox{or}~i=j \\
0 & \mbox{o.w.}
\end{array} \right.
\nonumber \\
\langle \g_i \g_j^\top, \h_j \h_j^\top \rangle &= \left\{ \begin{array}{ll}
0 & \mbox{if}~ i = j \\
\frac{1}{4} & \mbox{o.w.}
\end{array} \right.
\end{align}
Since all $\r_i \in \cD \cup \{ \v_1 \v_1^\top \}$ have unit-norm, \ie, $\|\r_i \| = \sqrt{\langle \r_i, \r_i \rangle} = 1$, we conclude that $\rho(\cD) = \frac{1}{4}$ for $i > 1, i \neq j$.

Denote by $\s_t$ the \textit{residual} after $t$ iterations of Full OMP. 
Denote by $m$ an index in true support index set $\cI$ and $n$ an index in complementary index set $\cI^c$.
OMP computes the correct support $\cI$ if the following inequality holds for every $m \in \cI, n \in \cI^c$ at every iteration $t \in \{1, \ldots, K-1\}$:
\begin{align}
\max_{\r_m \in \left\{ \substack{
\v_m \v_m^\top, \g_m \g_m^\top,
\\
\h_m \h_m^\top
} \right\} } |\langle \s_t, \r_m \rangle| > \max_{\r_n \in \left\{ \substack{
\v_n \v_n^\top, \g_n \g_n^\top,
\\
\h_n \h_n^\top
} \right\} } |\langle \s_t, \r_n \rangle| .
\end{align}
Given $\v_1 \v_1^\top$ is selected into the support initially, residual $\s_t$ is always orthogonal to $\v_1 \v_1^\top$. 
There are at most $K-1$ additional indices to select into the support at iteration $t$.
Denote by $\cI_t$ a subset of indices remaining to be selected into the support at iteration $t$, \ie, $\cI_t \subseteq \cI$.
Thus, $|\langle \s_t, \r_m \rangle|$, $m \in \cI$, is

\vspace{-0.05in}
\begin{small}
\begin{align}
&= \left| \langle \sum_{i \in \cI_t} a_i \v_i \v_i^\top + b_i \g_i \g_i^\top + c_i \h_i \h_i^\top, \r_m \rangle \right| 
\nonumber \\
&=\left| \sum_{i \in \cI_t} a_i \langle \v_i \v_i^\top, \r_m \rangle + b_i \langle \g_i \g_i^\top, \r_m \rangle + c_i \langle \h_i \h_i^\top, \r_m \rangle \right|
\nonumber \\
&\stackrel{(a)}{\geq} |a_m| - \sum_{i \in \cI_t, i \neq m} |a_i| \, |\langle \v_i \v_i^\top, \r_m \rangle | - \sum_{j \in \cI_t} |b_j| \, |\langle \g_j \g_j^\top, \r_m \rangle | 
\nonumber \\
& ~~~ - \sum_{j \in \cI_t} |c_j| \, |\langle \h_j \h_j^\top, \r_m \rangle| 
~\stackrel{(b)}{\geq}~ |a_m| - \frac{|b_m|}{2} - \frac{|c_m|}{2}
\label{eq:lower_bound_correct}
\end{align}
\end{small}\noindent
where in $(a)$ we assume $\r_m =\v_m \v_m^\top$.
In $(b)$, $|\langle \g_m \g_m^\top, \r_m \rangle| = |\langle \h_m \h_m^\top, \r_m \rangle| = \frac{1}{2}$, and $|\langle \v_j \v_j^\top, \r_m \rangle| = |\langle \g_j \g_j^\top, \r_m \rangle| = |\langle \h_j \h_j^\top, \r_m \rangle| = 0$ for $j \neq m$. 
A similar lower bound can be derived for $\r_m = \g_m \g_m^\top$ as
\begin{align}
&\stackrel{(c)}{\geq} |b_m| - \frac{|a_m|}{2} - \frac{1}{4} \sum_{j \in \cI_t, j \neq m} \left( |b_j| + |c_j| \right) .
\end{align}
In $(c)$, $|\langle \v_m \v_m^\top, \g_m \g_m^\top \rangle| = \frac{1}{2}$ and $|\langle \v_j \v_j^\top, \g_m \g_m^\top \rangle| = 0$ for $j \neq m$, $|\langle \g_m \g_m^\top, \h_m \h_m^\top \rangle| = 0$, and $|\langle \g_j \g_j^\top, \g_m \g_m^\top \rangle| = |\langle \g_j \g_j^\top, \h_m \h_m^\top \rangle| = \frac{1}{4}$ for $j \neq m$.
For $\r_m = \h_m \h_m^\top$, we write
\begin{align}
&\geq |c_m| - \frac{|a_m|}{2} - \frac{1}{4} \sum_{j \in \cI_t, j \neq m} \left( |b_j| + |c_j| \right) .
\end{align}

For $|\langle \s_t, \r_n \rangle|$, $n \in \cI^c$, we write
\begin{align}
&= \left| \sum_{i \in \cI_t} a_i \langle \v_i \v_i^\top, \r_n \rangle + b_i \langle \g_i \g_i^\top, \r_n \rangle + c_i \langle \h_i \h_i^\top, \r_n \rangle \right|
\nonumber \\
&\leq \sum_{i \in \cI_t}  |a_i|  |\langle \v_i \v_i^\top, \r_n \rangle| + |b_i| |\langle \g_i \g_i^\top, \r_n \rangle| + |c_i| |\langle \h_i \h_i^\top, \r_n \rangle| .
\nonumber \\
&\stackrel{(d)}{\leq} 0
\label{eq:upper_bound_incorrect}
\end{align}
where in $(d)$ we assume $\r_n = \v_n \v_n^\top$, and $| \langle \v_i \v_i^\top, \r_n \rangle | = |\langle \g_i \g_i^\top, \r_n \rangle| = |\langle \h_i \h_i^\top, \r_n \rangle| = 0$ for $i \neq n$. 
If $\r_n = \g_n \g_n^\top$, then the upper bound is
\begin{align}
\stackrel{(e)}{\leq} \frac{1}{4} \sum_{i \in \cI_t} |b_i| + |c_i| .
\end{align}
In $(e)$, $|\langle \v_i \v_i^\top \g_n \g_n^\top \rangle| = 0$ for $i \neq n$, and $| \langle \g_i \g_i^\top \g_n \g_n \rangle| = | \langle \h_i \h_i^\top \g_n \g_n \rangle| = \frac{1}{4}$.
Same upper bound can be derived if $\r_n = \h_n \h_n^\top$.

We require the inner-product lower bound for the support component $\r_m$ to be always larger than the inner-product upper bound for the non-support component $\r_n$.
We thus conclude one of the following two inequalities must be true:

\vspace{-0.05in}
\begin{small}
\begin{align}
|a_m| &> \frac{|b_m|}{2} + \frac{|c_m|}{2} + \frac{1}{4} \sum_{i \in \cI_t}  \left( |b_i| + |c_i| \right)
\\
\max \left(|b_m|,\, |c_m|\right) &> \frac{|a_m|}{2} - \frac{|b_m| + |c_m|}{4} + \frac{1}{2} \sum_{i \in \cI_t}  \left( |b_i| + |c_i| \right)
\end{align}
\end{small}\noindent
The sums on the right hand side are largest when all terms are counted in full support set $\cI$.
Thus, the condition is most stringent when $\cI_t = \cI$, leading to \eqref{eq:opt_condition1} and \eqref{eq:opt_condition2}.

Thus, Full OMP identifies the correct support $\cI$ while the coefficients are optimized in a least-square sense via \eqref{eq:LS}.
We can therefore conclude that Full OMP computes an optimal solution $\Z^o$ to \eqref{eq:rankConst} for $r=K$.
By Lemma\;\ref{lemma:duality2}, $\Z^o$ is also a solution to \eqref{eq:proxRank} for a weight parameter value $\eta^*$ such that an optimal solution $\Z^*$ to \eqref{eq:proxRank} results in $\Gamma(\Z^* - \W^{(1)}) = K$.
%
%Given solutions $\{\Z^o\}$ to \eqref{eq:rankConst} for increasing $r$, the full greedy algorithm finds the best one $\Z^o$ among $\{\Z^o\}$ that minimizes \eqref{eq:proxRank} for a given $\eta$, which by Corollary\;\ref{corollary:superset} is also the optimal solution to \eqref{eq:proxRank}.  
\end{proof}

\section{Proof of Convergence}
\label{append:convergence}

Replacing the constraints with the indication function \eqref{eq:ind} in the objective, the optimization \eqref{eq:jointForm} for $\W^{(2)}$ and $\x_2$ can be rewritten in an unconstrained form:

\vspace{-0.1in}
\begin{small}
\begin{align}
\min_{\W^{(2)}, \x_2} & (\y_2 - \H^{(2)} \x_2)^\top \Q (\y_2 - \H^{(2)} \x_2) + \xi \, \|\x_2 - \x_1 \|^2_2 
\nonumber \\
& + \mu \, \text{Tr}((\text{diag}(\W^{(2)}\1 - \W^{(2)}) \x_2 \x_2^\top) 
\nonumber \\
& + \eta \, \Gamma (\W^{(2)} - \W^{(2)}) + \I_W(\W^{(2)}) .
\label{eq:append_obj}
\end{align}
\end{small}\noindent
During our optimization, when $\W^{(2)}$ is fixed, solving for $\x_2^*$ (the \textit{optimal} solution for fixed $\W^{(2)}$) via linear system \eqref{eq:linSys1} means that the sum of the first, second, and third terms is monotonically non-increasing w.r.t. to previous solution $\x_2$.

Conversely, when $\x_2$ is fixed, the PGD algorithm \eqref{eq:PGD} computes a new $\W^{(2)*}$, ensuring that the sum of the third, fourth, and fifth terms in \eqref{eq:append_obj} is monotonically non-increasing with respect to the previous solution \(\W^{(2)}\). 
Specifically, according to \cite{parikh13}, if the gradient $\nabla_{\W}$ is \textit{Lipschitz continuous} with constant $L$, then PGD \eqref{eq:PGD} converges at a rate of $\cO(1/k)$ when a step size $\epsilon \in (0, 1/L]$ is employed. 
%Further, \cite{combettes11} notes that convergence is maintained for step sizes smaller than \(2/L\), although for step sizes larger than \(1/L\), the method ceases to be a \textit{majorization-minimization} approach. 
$L$ can be upper-bounded by the largest eigenvalue of the Hessian matrix.

Since all terms in \eqref{eq:append_obj} are lower-bounded by zero, the objective function is also lower-bounded. 
Thus, the alternating algorithm does not oscillate and converges to a local minimum in a finite number of steps.

% On the other hand, when $\x_2$ is fixed, our proposed PGD algorithm \eqref{eq:PGD} to compute a new $\W^{(2)*}$ means that the sum of the third, fourth and fifth terms is monotonically non-increasing w.r.t. previous solution $\W^{(2)}$. 
% Since all the terms in \eqref{eq:append_obj} is lower-bounded by zero---and thus the objective in \eqref{eq:append_obj}---this implies our alternating algorithm does not oscillates and converges to a local minimum in finite number of steps.

\bibliographystyle{IEEEtran}
\bibliography{ref2}

\end{document}